\newcommand{\etal}{et al.\xspace}
\newcommand{\T}{\mathcal{T}}
\newcommand{\ith}{\ensuremath{\textsuperscript{th}}}
\newcommand{\gS}{\ensuremath{\mathcal{S}}\xspace}
\newcommand{\gI}{\ensuremath{\mathcal{C}_{\text{in}}}\xspace}
\newcommand{\gO}{\ensuremath{\mathcal{C}_{\text{out}}}\xspace}
\newcommand{\vI}{\ensuremath{v_{\text{in}}}\xspace}
\newcommand{\vO}{\ensuremath{v_{\text{out}}}\xspace}
\newcommand{\ddd}{,\ldots ,}
\newcommand{\mc}{\mathcal}
\newcommand{\qg}{quadrilateral graph}
\newcommand{\Anote}[1]{{ #1}}
\newcommand{\bluenote}[1]{}
\newcommand{\Abluenote}[1]{#1}
\newcommand{\postRefereeChanges}[1]{#1}
\newcommand{\prc}[1]{\postRefereeChanges{#1}}
\newcommand{\remove}[1]{{}}
\newcommand{\ignore}[1]{}
\newtheorem{lemma}{Lemma}
\newtheorem{theorem}{Theorem}
\newtheorem{corollary}{Corollary}
\newtheorem*{conj}{Orbit Conjecture}
\journal{Computational Geometry: Theory and Applications}
\begin{document}

\begin{frontmatter}

\title{Flipping Edge-Labelled Triangulations\footnote{\prc{We dedicate this paper to Ferran Hurtado, whose creativity in research and seminal work on flips continues to inspire us all.}}}

\author[carleton]{Prosenjit Bose}     \ead{jit@scs.carleton.ca}
\author[waterloo]{Anna Lubiw}         \ead{alubiw@uwaterloo.ca}
\author[waterloo]{Vinayak Pathak}     \ead{vpathak@uwaterloo.ca}
\author[carleton]{Sander Verdonschot} \ead{sander@cg.scs.carleton.ca}

\address[carleton]{School of Computer Science, Carleton University, Ottawa ON, Canada}
\address[waterloo]{School of Computer Science, University of Waterloo, Waterloo ON, Canada}

\begin{abstract}
Flips in triangulations have received a lot of attention over the past decades.
However, the problem of tracking where particular edges go during the flipping process has not been addressed.   
We examine this question by attaching unique labels to the triangulation edges.
We introduce the concept of the \emph{orbit} of an edge $e$, which is the set of all edges reachable from $e$ via flips.

We establish the first upper and lower bounds on the diameter of the flip graph in this setting.
Specifically, we prove tight $\Theta(n \log n)$ bounds for edge-labelled triangulations of $n$-vertex convex polygons and combinatorial triangulations, contrasting with the $\Theta(n)$ bounds in their respective unlabelled settings.
\prc{The $\Omega(n \log n)$ lower bound for the convex polygon setting might be of independent interest, as it generalizes lower bounds on certain sorting models.}
When simultaneous flips are allowed, the upper bound for convex polygons decreases to $O(\log^2 n)$, although we no longer have a matching lower bound.

Moving beyond convex polygons, we show that edge-labelled triangulated polygons with a single reflex vertex can have a disconnected flip graph.
This is in sharp contrast with the unlabelled case, where the flip graph is connected for any triangulated polygon.
For spiral polygons, we provide a complete characterization of the orbits.
This allows us to decide \prc{connectivity} of the flip graph of a spiral polygon in linear time.
We also prove an upper bound of $O(n^2)$ on the diameter of each connected component, which is optimal in the worst case. 
We conclude with an example of a non-spiral polygon whose flip graph has diameter $\Omega(n^3)$.
\end{abstract}

\begin{keyword}
Flip\sep Convex Polygon\sep Spiral Polygon\sep Quadrilateral Graph\sep Simultaneous Flip
\end{keyword}

\end{frontmatter}

%\linenumbers

\section{Introduction}
\label{sec:introduction}

The problem of reconfiguring one triangulation to another is well-studied for combinatorial triangulations, triangulations of points sets, and triangulations of polygons. In all these settings, the basic reconfiguration operation is the \emph{flip} operation that removes one edge of the triangulation and adds another edge to obtain a new triangulation. Not every edge can always be flipped. The constraints on the edges involved in a flip depend on the setting. \postRefereeChanges{The geometric setting refers to a triangulation of a point set (a maximal set of non-crossing edges joining pairs of points) or polygon (a maximal set of non-crossing chords inside the polygon).} In both cases, a flip removes one diagonal of a convex quadrilateral and replaces it by the other diagonal. In the combinatorial setting, a triangulation is a maximal planar graph with the clockwise order of edges around each vertex specified. Here, a flip removes one edge and replaces it by the other diagonal of the resulting quadrilateral, \prc{as} long as the new edge is not already an edge of the triangulation. When we replace an edge $e$ by an edge $f$ with a flip, we say that $e$ \emph{flips to} $f$. 

Another way to view reconfigurations via flips is through the \emph{flip graph}, which has a vertex for each triangulation and an edge between two vertices if their corresponding triangulations differ by a single flip. The most important property of flips is that they can be used to reconfigure any triangulation into any other triangulation of the same point set, polygon, or vertex set---that is, the flip graph is connected.

Wagner~\cite{Wag36} initiated the study of flips in 1936, by proving that the flip graph is connected in the combinatorial setting. For points sets, connectivity of the flip graph is a consequence of Lawson's result~\cite{Law77} that any triangulation of a point set can be flipped to the Delaunay triangulation, which then acts as a ``canonical'' triangulation from which every triangulation can be reached. For triangulations of a polygon, the constrained Delaunay triangulation can be used in the same way to prove that the flip graph is connected~\cite{Bern-Eppstein}. In Section~\ref{sec:background}, we discuss these results in more detail, along with bounds on the number of flips required, i.e., bounds on the diameter of the flip graph.

Reconfiguration of triangulations via flips is of theoretical interest in the study of associahedra~\cite{STT88} and mixing~\cite{molloy1999mixing}. It also has more practical benefits in mesh generation and for finding triangulations that optimize certain quality measures~\cite{Bern-Eppstein,Edelsbrunner}. For a broader overview of the literature on flips, we refer the reader to a survey by Bose and Hurtado~\cite{BH09}.

Despite the extensive literature on flips in many different settings, to our knowledge, no one has specifically studied where individual edges move during the course of a sequence of flips in any setting. We say that an edge $f$ is \emph{reachable} from an edge $e$ if there is a sequence of edges $e, \ldots, f$, such that for every two consecutive edges $a$ and $b$ in the sequence, there exists a triangulation in which $a$ flips to $b$. We define the \emph{orbit} of an edge $e$ to be the set of all edges reachable from $e$. This gives rise to a very natural question: What is the orbit of an edge? Although every triangulation can be flipped to every other, the orbits of edges can be very different. For example, some polygons have a unique triangulation, in which case the orbits are all singletons. At the other extreme, a convex polygon has a single orbit containing all the diagonals. In the geometric setting, the orbits are exactly the connected components of a graph introduced by Eppstein~\cite{Epp} as the \emph{\qg}. This graph has a vertex for every diagonal and an edge between two vertices if their corresponding diagonals $e$ and $f$ intersect and their endpoints form a convex quadrilateral that is empty.  

An intriguing question is how orbits combine. In other words, we want to track where multiple edges go during a sequence of flips. To address this question, we study flips in \emph{edge-labelled triangulations}: triangulations of a point set, polygon, or vertex set where each edge has a unique label from $\{1, \ldots, m\}$, where $m$ is the number of edges. If we flip an edge of an edge-labelled triangulation, the new edge is assigned the label of the removed edge. In particular, this means that the set of edge labels is preserved throughout any flip sequence. In the geometric setting, we often omit the labels on the edges of the convex hull or on the boundary of the polygon, as these can never be flipped.

Given two edge-labelled triangulations $T$ and $T'$ of the same point set, polygon, or vertex set, it is not always possible to transform $T$ to $T'$ via a sequence of flips. For example, when a polygon has a unique triangulation, no flips are possible, so if the labellings of $T$ and $T'$ differ, no flip sequence can transform one to the other. A necessary condition for such a flip sequence to exist is that each edge with label $i$ in $T'$ must be in the orbit of the edge with label $i$ in $T$. We conjecture that this condition is also sufficient.

\begin{conj}
 \label{con:main}
 Given two edge-labelled triangulations, there is a sequence of flips that transforms one into the other if and only if for every label, the initial and final edge with that label lie in the same orbit.
\end{conj}

The Orbit Conjecture is known to hold in some cases. A result by Eppstein~\cite{Epp} implies that the Orbit Conjecture holds for triangulations of point sets with no empty pentagon. These point sets are highly degenerate since every set of ten points in general position contains an empty pentagon. Eppstein showed that in this case every orbit is a tree, and each triangulation contains exactly one edge from each orbit. There are a few other results of a similar flavour. Cano~\etal~\cite{JDHU13} proved the analogous statement for edge-labelled non-maximal plane graphs, although instead of flips, they ``rotate" edges around one of their endpoints. Hernando~\etal~\cite{HHMR03} proved the analogous statement for edge-labelled spanning trees of an underlying graph, where the ``flip" operation consists of removing an edge and replacing it elsewhere with the same label while maintaining connectivity. They showed that the orbits are exactly the 2-connected components of the underlying graph. Pathak~\cite{Pathak} generalized this result to labelled bases of a matroid.

\prc{Edge-labelled flips in triangulations of a convex polygon were independently studied by Araujo-Pardo~\etal~\cite{AHOS14}, who mainly focused on the combinatorial properties of the flip graph. In the unlabelled setting, the flip graph is known to be isomorphic to the 1-skeleton of a polyhedron, called the associahedron. Araujo-Pardo~\etal proved that this is also the case for the edge-labelled setting, and called the resulting polyhedron the colourful associahedron. As part of their proof, they show that the flip graph of edge-labelled triangulations of a convex polygon is connected, which implies the Orbit Conjecture for this setting. A careful analysis of their argument shows that it gives an upper bound on the diameter of the flip graph that is quadratic in the number of vertices of the polygon. In addition, they show that the colourful associahedron covers the regular associahedron and fully characterize its automorphism group.}

In this paper we address the Orbit Conjecture for edge-labelled triangulations of polygons and edge-labelled combinatorial triangulations. We give an alternative proof for the connectivity of the flip graph of edge-labelled triangulations of an $n$-vertex convex polygon that shows that the diameter is $O(n \log n)$, and we provide a matching $\Omega(n \log n)$ lower bound, which contrasts with the $\Theta(n)$ bound for unlabelled flips~\cite{STT88}.

We also consider what happens when we allow multiple edges to be flipped simultaneously. We prove that the simultaneous flip distance between two edge-labelled triangulations of a convex polygon is $O(\log^2 n)$, in contrast with the $\Theta(\log n)$ bound established for the unlabelled setting~\cite{Galtier03}.

After convex polygons, we consider spiral polygons (polygons with at most one reflex chain). These may have multiple orbits, so the flip graph is not necessarily connected. We again prove the Orbit Conjecture, and give a tight $\Theta(n^2)$ worst case bound on the diameter of each connected component of the flip graph. We also characterize the orbits, and show how to test in linear time whether 
there is a flip sequence that transforms one given edge-labelled triangulation into another. 

Turning to more general polygons, we give an example of a polygon with two reflex chains whose flip graph is connected but has diameter $\Omega(n^3)$. This is in stark contrast to the $\Theta(n^2)$ bound on the diameter of the flip graph in the unlabelled setting~\cite{Bern-Eppstein,HNU99}, and the $\Theta(n^2)$ bound for spiral polygons.

Finally, we consider the case of $n$-vertex edge-labelled combinatorial triangulations. As for convex polygons, we prove the Orbit Conjecture by showing that the flip graph is connected and has diameter $\Theta(n \log n)$, contrasting with the $\Theta(n)$ bound for the unlabelled case~\cite{STT92}.

%%%%%%%%%%%%%%%%%%%%%%%%%%%%%%
\section{Background}
\label{sec:background}

The original result on flips was due to Wagner~\cite{Wag36}, who showed that $O(n^2)$ flips were sufficient to transform between any two given $n$-vertex combinatorial triangulations. In particular, he showed that any triangulation can be transformed into one with two dominant vertices, often referred to as \emph{Wagner's canonical triangulation}. The upper bound was first improved to linear by Sleator et al.~\cite{STT92}---this is optimal, since converting a triangulation with constant maximum degree into Wagner's canonical triangulation immediately gives a linear lower bound. Currently, the best upper bound on the diameter of the flip graph is $5n-23$ \cite{CHKTW15}. See Bose and Verdonschot~\cite{BV11} for a more detailed history of this problem in the combinatorial setting.

Combinatorial triangulations with vertex labels were studied by Sleator et al.~\cite{STT92} as an intermediate form between unlabelled combinatorial triangulations and triangulations of a set of points in the plane. In this setting, they showed that $O(n \log n)$ flips are sufficient to transform one vertex-labelled triangulation with $n$ vertices into any other, and that $\Omega(n \log n)$ flips are sometimes necessary. Note that by transforming both the initial and final triangulation into Wagner's canonical form without paying attention to vertex labels, the problem becomes one of sorting vertex labels. This is essentially what they do to prove the $O(n \log n)$ upper bound. For the $\Omega(n \log n)$ lower bound, they show that there are at most $9^{n+m}$ distinct vertex-labelled triangulations that are reachable from a given triangulation $T$ via $m$ flips. Since there are already $(n-3)!$ different ways to label the vertices of Wagner's canonical triangulation, the $\Omega(n \log n)$ lower bound follows.

For $n$-vertex triangulations of point sets, Lawson~\cite{Law77} proved that given any triangulation, a sequence of $O(n^2)$ flips suffices to transform it to the Delaunay triangulation. For triangulations of an $n$-vertex polygon, Bern and Eppstein~\cite{Bern-Eppstein} showed that the flip graph is also connected and has diameter $O(n^2)$. Unlike in the combinatorial setting, Hurtado~\etal~\cite{HNU99} proved that these quadratic bounds are tight by providing an $\Omega(n^2)$ lower bound.  

Flips in triangulations of a convex polygon are especially interesting because they correspond exactly to rotations in a binary tree. In this way, Sleator et al.~\cite{STT88} answered a question about the rotation distance between binary trees, by proving a tight bound of $2n-10$ on the diameter of the flip graph. Recently, Pournin~\cite{Pournin13} found a purely combinatorial proof of this bound, avoiding the use of hyperbolic geometry in the original proof and establishing the bound even for small values of $n$.

The idea of performing flips in parallel was introduced by Hurtado~\etal~\cite{Hurtado98}. In the geometric setting, a set of edges may be \emph{simultaneously flipped} if each edge is flippable and no two edges are incident to the same triangle. Galtier~\etal~\cite{Galtier03} showed that $\Theta(n)$ simultaneous flips are sufficient and sometimes necessary to reconfigure one triangulation of a point set or simple polygon to another. In the case of convex polygons, they showed that $\Theta(\log n)$ simultaneous flips were sufficient and sometimes necessary. Bose~\etal~\cite{Simultaneous06} proved that $\Theta(\log n)$ simultaneous flips are sufficient and sometimes necessary in the combinatorial setting---in this case a simultaneous flip may be performed even if some edges in the set cannot be individually flipped.

\section{Convex polygons}
\label{sec:convex}

\prc{In this section, we prove that the flip graph of edge-labelled triangulations of a convex polygon with $n$ vertices is connected. Since all diagonals are in the same orbit, this is equivalent to proving the Orbit Conjecture in this setting. Aurajo-Pardo~\etal~\cite{AHOS14} also proved this, but our proof gives a tight bound of $\Theta(n \log n)$ on the diameter of the flip graph.} In addition, we show that $O(\log^2n)$ simultaneous flips are sufficient and $\Omega(\log n)$ simultaneous flips are sometimes necessary.

\begin{figure}[htb]
\centering
\includegraphics{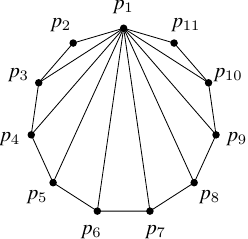}
\caption{A canonical triangulation with 11 vertices.}
\label{fig:canonical}
\end{figure}

Given a convex polygon with $n$ vertices, label the vertices $p_1, \ldots, p_{n}$ in counter-clockwise order and give each diagonal a unique label from the set $\{1, \ldots, n-3\}$.
Denote by $T^*$ the \emph{canonical} triangulation where vertex $p_1$ is dominant (see Figure~\ref{fig:canonical}).
Let $\rho$ be the sequence of edge labels ordered in counter-clockwise order around $p_1$, starting with diagonal $p_1p_3$.
Note that $\rho$ is a permutation of $[1..n-3]$.
We denote this canonical triangulation by $(T^*, \rho)$.
The next lemma shows that in this canonical triangulation, ordered subsequences can easily be reversed, which plays an important role for the upper bound.
\prc{Note the distinction between an \emph{ordered} subsequence and a \emph{contiguous} subsequence. An ordered subsequence is a sequence consisting of a subset of the elements of the original sequence in the same order as in the original. A contiguous subsequence is an ordered subsequence in which the elements are consecutive in the original sequence.}

\begin{figure}[htb]
\centering
\includegraphics{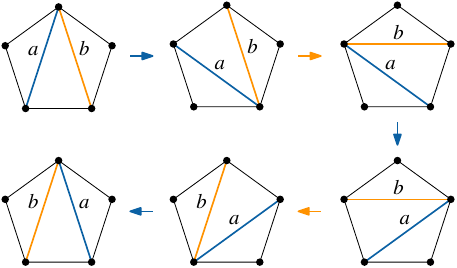}
\caption{A sequence of five flips that reverses the order of the two diagonals of a pentagon.}
\label{fig:pentagonswap}
\end{figure}

\begin{lemma}
\label{lemma:noncontiguous}
Given $(T^*, \rho)$, let $S$ be a contiguous subsequence of diagonals adjacent to vertex $p_1$, ordered in counter-clockwise order. Any ordered subsequence $R=i_1, \ldots , i_r$ of $S$ can be reversed with \prc{at most} $2.5|S|$ flips.
\end{lemma}
\begin{proof}
Note that the edges of $R$ need not be consecutive. We use induction on $|S|$. In the base case, $|S| = 0$ and we do not need any flips to reverse $R$, so let $|S| > 0$ and assume that the lemma holds for any contiguous subsequence $S'$ with $|S'| < |S|$.

If there is any edge $e$ in $S$ that is not in $R$, we flip that edge. Now the remaining edges of $S$ form a contiguous subsequence of diagonals in a smaller convex polygon. Thus, we can reverse them with $2.5(|S| - 1)$ flips by induction. Flipping $e$ back completes the transformation, for a total of $2.5(|S| - 1) + 2 \leq 2.5|S|$ flips.

Now consider the case where $S = R$. If $|R|$ is odd, we can remove the middle diagonal from $R$, bringing us back to the previous case, so assume that $|R|$ is even. Then the two middle diagonals, $i_{r/2}$ and $i_{r/2+1}$, can be swapped with five flips, as shown in Figure~\ref{fig:pentagonswap}. But instead of completing this swap sequence, we halt it after the first three flips. This leaves the edges out of the way of the remaining edges, which we then reverse with $2.5(|S| - 2)$ flips by induction. Afterwards, we perform the final two flips to return $i_{r/2}$ and $i_{r/2+1}$, bringing the total number of flips to $2.5(|S| - 2) + 5 = 2.5|S|$.
\end{proof}

Using Lemma \ref{lemma:noncontiguous}, we can sort $\rho$ with $O(n \log n)$ flips, which gives us the following theorem.
\begin{theorem}
\label{thm-upper-bound}
Any edge-labelled triangulation of a convex polygon %$P$
with $n$ vertices can be transformed into any other edge-labelled triangulation using $O(n \log n)$ flips.
\end{theorem}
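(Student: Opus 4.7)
The plan is to reduce Theorem~\ref{thm-upper-bound} to a sorting problem on the canonical triangulation and then solve that sorting problem with an $O(n\log n)$-flip algorithm. First, by the $O(n)$ bound of Sleator, Tarjan, and Thurston on the unlabelled flip distance between triangulations of a convex polygon, both $T_1$ and $T_2$ can be transformed into $T^*$ in $O(n)$ flips each, carrying their labellings along the way. This yields two edge-labellings $\rho_1,\rho_2$ of $T^*$, and by the triangle inequality the original flip distance is bounded by $d((T^*,\rho_1),(T^*,\rho_2)) + O(n)$. By reversibility and composition of flip sequences this is equivalent to sorting an arbitrary permutation of the $n$ labels of $T^*$ using $O(n\log n)$ flips, so it is enough to prove this sorting bound.

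The central technical step is to translate flips on $T^*$ into operations on the label sequence. I would construct a flip-macro that realises the subsequence reversal described in the introduction: given positions $i_1 < \cdots < i_k$ in the label sequence spanning a covering interval of length $\ell = i_k - i_1 + 1$, the macro reverses the labels at those positions using $O(\ell)$ flips while leaving every other diagonal of $T^*$ and its label untouched. The macro acts inside the sub-polygon spanned by the covering interval: the sub-fan there is re-triangulated by an $O(\ell)$-flip sequence into an auxiliary (e.g.\ reverse-fan) shape, and then returned to canonical in a carefully ordered second pass that realises exactly the requested permutation of labels.

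Equipped with this primitive, I would sort an arbitrary permutation by divide and conquer. Split the $n$-label domain in half, recursively sort each half inside its own sub-fan of $T^*$ (the two recursions act on disjoint sub-polygons, so they do not interfere), and then merge the two sorted halves using a short sequence of non-contiguous reversals whose total interval-length cost is $O(n)$. The recurrence $T(n) = 2T(n/2) + O(n)$ resolves to $T(n) = O(n\log n)$, and composing with Step~1 proves the theorem.

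The principal obstacle is the macro construction. Because each flip transfers the label from the removed edge to the newly added edge, arranging for the net effect on the labels of a sub-fan to be a chosen subsequence reversal requires a careful inductive bookkeeping on $\ell$. A secondary subtle point is the merge step of the recursion: the non-contiguous nature of the reversal primitive is what keeps the merge cost linear — with only contiguous reversals one would obtain the weaker $O(n\log^2 n)$ bound of Pinter and Skiena — and verifying that $O(n)$ cost suffices for a merge of two sorted halves requires analysing the cycle structure of the associated shuffle permutation.
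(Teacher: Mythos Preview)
Your reduction to the canonical fan and your identification of the key primitive---a subsequence reversal on a covering interval of length $\ell$ at cost $O(\ell)$---match the paper exactly. The gap is in the sorting strategy built on top of that primitive.

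You propose merge sort: recursively sort the two positional halves, then merge. The merge step is precisely where your argument is incomplete, and you flag it yourself. The permutation taking ``sorted-$S$ concatenated with sorted-$\bar S$'' to the identity is the inverse of a stable partition; it is \emph{not} a single subsequence reversal, and I do not see how to realise it with total reversal cost $O(n)$. The natural recursive merge (block-swap the middle to get the small values left and large values right, then recurse on each half) costs $M(n)=2M(n/2)+O(n)=O(n\log n)$ per merge, which only yields $T(n)=2T(n/2)+O(n\log n)=O(n\log^2 n)$ overall---exactly the contiguous-reversal bound you were trying to beat. Your remark about ``analysing the cycle structure of the shuffle permutation'' does not resolve this: the cycle structure of a general riffle is not bounded in any way that obviously gives $O(n)$ cost.

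The paper sidesteps the issue by running quicksort instead of merge sort. The partition step around the median value $n/2$ is a \emph{single} non-contiguous reversal: take the subsequence of all positions $i$ that are on the wrong side (either $i\le n/2$ with $\rho_i>n/2$, or $i>n/2$ with $\rho_i\le n/2$); there are equally many of each kind, the left ones all precede the right ones, and reversing this subsequence pairs each left-wrong position with a right-wrong position. One reversal of cost at most $n$, then recurse in the two disjoint sub-fans: $T(n)=2T(n/2)+O(n)=O(n\log n)$.

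A smaller point: your description of the macro (``flip to a reverse-fan auxiliary shape and return in a carefully ordered second pass'') is too vague to verify. The paper's construction is a short induction: flip away any edge of $S\setminus R$ and recurse; once $S=R$, flip away the middle one or two edges, recurse on the shorter fan, flip them back, and use five flips to swap the two adjacent edges if they come back out of order.
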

\begin{proof}
We first ignore the labels and use $O(n)$ flips to transform \prc{both initial triangulations to the canonical triangulations} $(T^*, l'_1)$ and $(T^*, l'_2)$, respectively%~\cite{STT88}
.
Next, we show that for any permutation $\rho$, the canonical triangulation $(T^*, \rho)$ can be transformed into the triangulation $(T^*, \rho^*)$ using $O(n\log n)$ flips, where $\rho^*$ is the sorted permutation.
%Next, we show that a canonical triangulation $(T^*, \rho)$ corresponding to any permutation $\rho$ can be transformed using $O(n\log n)$ flips into the triangulation $(T^*, \rho^*)$, where $\rho^*$ is the sorted permutation.
 
We sort the permutation $\rho$ by imitating a version of quicksort that always uses the median as pivot. This ensures that the $O(n \log n)$ bound holds even in the worst case.
%We can sort in this model by imitating quicksort but in a balanced way to achieve $O(n \log n)$.
\prc{Let $\rho_i$ denote the $i$-th element in $\rho$ and} consider all $i \in [1 .. n-3]$ for which either: (1) $i\leq (n-3)/2$ and $\rho_i>(n-3)/2$; or (2) $i> (n-3)/2$ and $\rho_i\leq (n-3)/2$. Reverse the subsequence formed by these values of $i$. By Lemma~\ref{lemma:noncontiguous}, this can be done with $O(n)$ flips. This ensures that edges whose label is below $(n-3)/2$ lie in the first half of the sequence and the larger labels lie in the second half. By recursively applying this operation to the two halves, we sort the entire sequence using $O(n \log n)$ flips.
\end{proof}

As a corollary to the $O(n\log n)$ upper bound, we can get an $O(\log n)$-factor approximation algorithm for computing the flip-distance.

\begin{corollary}
\label{cor:approx}
We can approximate in polynomial time the flip-distance between two $n$-vertex edge-labelled triangulations $\T_1$ and $\T_2$ \prc{of a convex polygon} within an $O(\log n)$ factor \prc{of optimal}.
\end{corollary}
\begin{proof}
Call an edge $e$ \emph{fixed} if it satisfies the following properties:
\begin{inparaenum}[\itshape a\upshape)]
\item it occurs in both $\T_1$ and $\T_2$;
\item it has the same label in both $\T_1$ and $\T_2$; and
\item the set of distinct labels that occurs to the left of $e$ in $\T_1$ is exactly the same as the set that occurs to the left of $e$ in $\T_2$.
\end{inparaenum}
%Our algorithm never flips a fixed edge. % SV: this is not necessary and easily misunderstood to refer to the algorithm of Thm 1, which would be false.
Since
%Note that 
every non-fixed edge must %flip
be flipped
at least once%.
, we need at least $k$ flips, where $k$ is the number of non-fixed edges.
The set of fixed edges divides the polygon into several smaller convex polygons, each of which 
%is triangulated differently 
has a different edge-labelled triangulation
in $\T_1$ and $\T_2$. 
Suppose the $i\ith$ polygon has $n_i$ diagonals.
%These are non-fixed so we have a lower bound of $\Omega(n_i)$ flips.
We use the %$O(n_i\log n_i)$ 
flip-sequence from Theorem~\ref{thm-upper-bound} on the $i\ith$ polygon
to perform the required transformation with $O(n_i\log n_i)$ flips. The total number of flips is then $\sum_i O(n_i\log n_i) \leq O(k \log k)$
, thus giving an approximation factor of $O(\log n)$.
\end{proof}

\prc{
\subsection{Lower bound}
\label{sec:el-convex-lb}

The lower bound uses a slightly modified version of the $\Omega(n \log n)$ lower bound for the vertex-labelled setting by Sleator, Tarjan, and Thurston~\cite{STT92}. We first give an overview of their technique, before applying it to edge-la\-belled triangulations of a convex polygon.

Let a \emph{tagged half-edge graph} be an undirected graph with maximum degree $\Delta$, whose vertices have labels called \emph{tags}, and whose edges are split into two \emph{half-edges}. Each half-edge is incident to one endpoint, and labelled with an \emph{edge-end label} in $\{1, \ldots, \Delta\}$, such that all edge-end labels incident on a vertex are distinct (see Figure~\ref{fig:el-half-edge-graph} for an example). A \emph{half-edge part} is a half-edge graph in which some half-edges do not have a twin. Note that tags are not restricted to integers: they could be tuples, or even arbitrary strings.

\begin{figure}[htb]
 \centering
 \begin{subfigure}[b]{0.4\textwidth}
  \centering
  \includegraphics{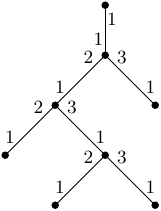}
  \caption{}
  \label{fig:el-half-edge-graph}
 \end{subfigure}
 \begin{subfigure}[b]{0.58\textwidth}
  \centering
  \includegraphics{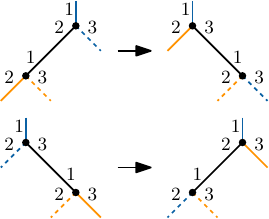}
  \caption{}
  \label{fig:el-rotation-grammar}
 \end{subfigure}
 \caption{(a) A half-edge graph representation of a rooted binary tree. (b) A graph grammar for rotations in binary trees. Correspondence between half-edges is indicated by a combination of colour and line style.}
\end{figure}

A \emph{graph grammar} $\Gamma$ is a sequence of production rules $\Gamma_i = (L, \rightarrow, \mathcal{T}, R)$, where $L$ and $R$ are half-edge parts with the same number of half-edges, $\rightarrow$ is a correspondence between the half-edges of $L$ and $R$, and $\mathcal{T}$ is a function that computes the tags of vertices in $R$ from those in $L$. A possible graph grammar for rotations in (unlabelled) binary trees is depicted in Figure~\ref{fig:el-rotation-grammar}.

Sleator, Tarjan, and Thurston prove the following theorem.

\begin{theorem}[Sleator, Tarjan, and Thurston~\cite{STT92}]
 \label{thm:el-stt-reachable}
 Let $G$ be a tagged half-edge graph with $n$ vertices, $\Gamma$ be a graph grammar, $c$ be the number of vertices in left sides of $\Gamma$, and $r$ be the maximum number of vertices in any right side of a production of $\Gamma$. Then $|R(G, \Gamma, m)| \leq (c + 1)^{n + r \cdot m}$, where $R(G, \Gamma, m)$ is the set of graphs obtainable from $G$ by derivations in $\Gamma$ of length at most $m$.
\end{theorem}

Recall that, by definition, the maximum degree of a tagged half-edge-graph is bounded by a constant. Thus, we cannot apply this theorem directly to triangulations of a convex polygon. Instead, we turn to the dual graph. The \emph{augmented dual graph} of a triangulation of a convex polygon is a tagged half-edge graph $G$ with two sets of vertices: triangle-vertices $T$ corresponding to the triangles of the triangulation, and edge-vertices $E_{\textsc{CH}}$ corresponding to the boundary edges. One edge-vertex is designated as the \emph{root}.

Two triangle-vertices are connected by an edge if their triangles are adjacent. All edge-vertices are leaves, each connected to the triangle-vertex whose triangle is incident to their corresponding edge (see Figure~\ref{fig:el-triangulation-to-tree}). As every triangle has three edges, the maximum degree of $G$ is three. The edge towards the root receives edge-end label $1$. For a triangle-vertex, the other edge-end labels are assigned in counter-clockwise order, as in Figure~\ref{fig:el-half-edge-graph}.

\begin{figure}[htb]
 \centering
 \includegraphics{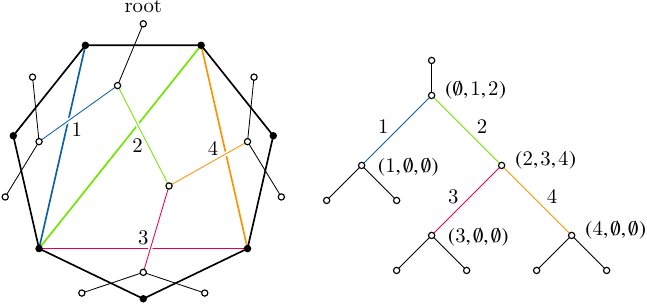}
 \caption{An edge-la\-belled triangulation of a convex polygon with its augmented dual graph. The \prc{edge-labels} on the dual graph are shown to more clearly indicate the correspondence -- they are actually labelled with edge-end labels as in Figure~\ref{fig:el-half-edge-graph}.}
 \label{fig:el-triangulation-to-tree}
\end{figure}

This is where we deviate slightly from the original paper. Since Sleator, Tarjan, and Thurston were working in the vertex-labelled setting, they used the tags in the augmented dual graph to encode the labels of the vertices around the corresponding triangles. Instead, we use these tags to encode the edge-labels. Specifically, we tag each triangle-vertex with a triple containing the edge-label of each edge of its triangle, starting from the edge closest to the root, and proceeding in counter-clockwise order. Edges of the convex hull are assumed to have label $\emptyset$. Edge-vertices will not be involved in any of the production rules, so they do not need tags.

As flips in the triangulation correspond to rotations in the augmented dual graph~\cite{STT88}, the graph grammar is identical to the graph grammar presented before. The only addition is the computation of new tags for the vertices on the right-hand side (see Figure~\ref{fig:el-labelled-rotation-grammar}). This grammar has four vertices in left sides, and a maximum of two vertices in any right side. Since a triangulation of an $n$-vertex convex polygon has $n - 2$ triangles and $n$ convex hull edges, the augmented dual graph has $2n - 2$ vertices. Thus, Theorem~\ref{thm:el-stt-reachable} gives us the following.

\begin{figure}[htb]
 \centering
 \includegraphics{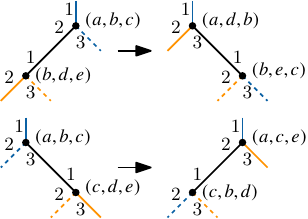}
 \caption{A graph grammar for rotations in augmented dual graphs, which correspond to flips in edge-la\-belled triangulations of a convex polygon.}
 \label{fig:el-labelled-rotation-grammar}
\end{figure}

\begin{lemma}
 \label{lem:el-convex-reachable}
 Given an edge-la\-belled triangulation $G$ of an $n$-vertex convex polygon, the number of distinct edge-la\-belled triangulations reachable from $G$ in $m$ flips is at most $5^{2n - 2 + 2m}$.
\end{lemma}

This bound can be further refined to $3^{n - 1 + 2m}$, using the leader-follower and zero-elimination techniques from Sleator, Tarjan, and Thurton's paper~\cite{STT92}. However, the cruder bound already suffices to derive the correct asymptotic lower bound.

\begin{theorem}
 \label{thm:el-convex-lb}
 There are pairs of edge-la\-belled triangulations of a convex polygon with $n$ vertices such that transforming one into the other requires $\Omega(n \log n)$ flips.
\end{theorem}
\begin{proof}
 We first estimate the number of edge-la\-belled triangulations. An $n$-vertex convex polygon has $n - 3$ diagonals, and in a fan triangulation, each sequence of labellings results in a new triangulation. Thus, there are at least $(n - 3)!$ edge-la\-belled triangulations.

 Let $d$ be the diameter of the flip graph. Then, for every graph $G$, $d$ flips suffice to reach all edge-la\-belled triangulations. But from Lemma~\ref{lem:el-convex-reachable}, we know that a sequence of $m$ flips can generate at most $5^{2n - 2 + 2m}$ unique edge-la\-belled triangulations. This gives us the following bound.
 \begin{align*}
  5^{2n - 2 + 2d}~~&\geq~~(n - 3)! \\
  \log_5 5^{2n - 2 + 2d}~~&\geq~~\log_5 (n - 3)! \\
  2n - 2 + 2d~~&\geq~~\log_5 (n! / n^3) \\
  2d~~&\geq~~\log_5 n! - \log_5 n^3 - 2n + 2\\
  2d~~&\geq~~\Omega(n \log n) - O(n) \\
  d~~&\geq~~\Omega(n \log n) \qedhere
 \end{align*}
\end{proof}

In Section~\ref{sec:sorting}, we discuss the implications of this theorem on sorting permutations in length-weighted models.
}

\subsection{Simultaneous flips}

We now turn our attention to simultaneous flips. Recall that a set of edges may be simultaneously flipped if each edge is flippable and no two edges are incident to the same triangle. For edge-labelled triangulations of a convex polygon, we show that $O(\log^2 n)$ simultaneous flips suffice. Our approach essentially hinges on showing how to perform a balanced partition step of quicksort with $O(\log n)$ simultaneous flips.

\begin{theorem}\label{thm:simultaneous}
Given two edge-labelled triangulations of a convex polygon with $n$ vertices, $O(\log^2 n)$ simultaneous flips are sufficient to transform one to the other.
\end{theorem}
\begin{proof}
We first ignore the labels and, by the result of Galtier~\etal~\cite{Galtier03}, use $O(\log n)$ simultaneous flips to transform both \prc{edge-labelled triangulations} to the canonical triangulations $(T^*, l'_1)$ and $(T^*, l'_2)$, respectively.
Next, we show how to transform a general canonical triangulation $(T^*, \rho)$ to $(T^*, \rho^*)$, where $\rho^*$ is the sorted permutation.
We do this by imitating quicksort.
In particular, we show that the \prc{partition} step of quicksort can be carried out in $O(\log n)$ simultaneous flips.
Then, by recursing \prc{on} both halves simultaneously, we get that the total number of simultaneous flips satisfies the recursion $T(n)=T(n/2)+O(\log n)$, which solves to $T(n) = O(\log^2 n)$.

We now address the partition step of quicksort.
Let $E$ be the set of all non-boundary edges, and let $E_l=\{\rho_i \mid i < n/2$ and $\rho_i\geq n/2\}$ be the set of edges that are in the left half in $\rho$ but in the right half in $\rho^*$.
Similarly, let $E_r$ be the set of edges that are in the right half in $\rho$ but in the left half in $\rho^*$.
We first flip all the edges of $E \setminus (E_l \cup E_r)$.
Since one simultaneous flip can flip every other edge of the set, $O(\log n)$ simultaneous flips suffice to flip the whole set.
The edges of $(E_l \cup E_r)$ now form a canonical triangulation of a smaller convex polygon.
For the remainder of the proof we work only with this smaller convex polygon.  
In other words, we only consider canonical triangulations $(T^*, \rho)$ where \emph{every} edge on the left half wants to go to the right half and vice versa.

Let $\T$ be a canonical triangulation where every edge on the left half is coloured red and every edge on the right half is coloured blue.
Let $\T'$ be the triangulation with the colours interchanged.
We show how to transform $\T$ to $\T'$ with $O(\log n)$ simultaneous flips.
We do this by converting both $\T$ and $\T'$ into a common coloured triangulation in which the diagonals form a path that alternates between red and blue edges, as shown in Figure~\ref{fig:alternating-zigzagX}. 
We call this triangulation an \emph{alternating zig-zag}.
Note that $p_1$ is the high degree vertex of the canonical triangulation.

\begin{figure}[htb]
\centering
 \begin{subfigure}[b]{0.48\textwidth}
  \centering
  \includegraphics{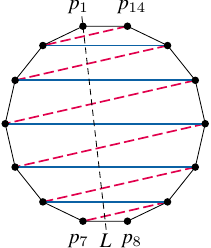}
  \caption{}
  \label{fig:alternating-zigzagX}
 \end{subfigure}
 \begin{subfigure}[b]{0.48\textwidth}
  \centering
  \includegraphics{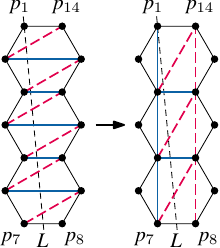}
  \caption{}
  \label{fig:alternating-zigzag-hexagons}
 \end{subfigure}
 \caption{(a) An 11-edge zig-zag alternating between red (dotted) and blue (solid) edges.
(b) An alternating zig-zag can be thought of as a collection of hexagons. In $O(1)$ simultaneous flips, we can move half of the edges to the correct side of $L$, with the rest forming a smaller alternating zig-zag.}
\end{figure}

Imagine a line $L$ passing through $p_1$ that separates the red edges from the blue edges in the canonical triangulation and crosses all the edges in the alternating zig-zag. 
We show how to transform the alternating zig-zag to $\T'$.
We claim that with $O(1)$ simultaneous flips, we can ensure that half the blue edges are to the left of this line and half the red edges are to the right.
Moreover, the edges that still cross $L$ form an alternating zig-zag of a subpolygon at most half the size of $\T'$.

To do this, note that each four consecutive triangles of the zig-zag form a hexagon, as shown in Figure~\ref{fig:alternating-zigzag-hexagons}.
Each hexagon has three diagonals, two red and one blue, and the hexagons are separated by the remaining blue edges.
As any transformation of a hexagon takes a constant number of flips, we can reconfigure all hexagons with a constant number of simultaneous flips.
In particular, we transform each hexagon to have one blue edge to the left of $L$ and one red edge to the right of $L$, as shown in Figure~\ref{fig:alternating-zigzag-hexagons}.
The triangles that still intersect $L$ now form an alternating zig-zag of half the size, consisting of the remaining red diagonal inside each hexagon and the blue edges separating the hexagons.
Thus, by repeating the process $O(\log n)$ times, we ensure that every blue edge is to the left of $L$ and every red edge is to the right.

Now the subpolygon to the left of $L$ has only blue diagonals, all of which we can make incident to $p_1$ with $O(\log n)$ simultaneous flips.
By doing the same with the red edges, we can transform the alternating zig-zag into $\mathcal{T'}$ using $O(\log n)$ simultaneous flips.
Similarly, we can transform the alternating zig-zag into $\mathcal{T}$ by moving the blue diagonal of each hexagon to the right and one of the red diagonals to the left.
Thus we can perform the partition step in $O(\log n)$ simultaneous flips, which completes the proof.
\end{proof}

The $\Omega(\log n)$ lower bound on the worst-case number of simultaneous flips in the unlabelled case trivially provides an $\Omega(\log n)$ lower bound in the edge-labelled setting as well. 
We prove a stronger lower bound. In particular, the following theorem shows that even the ``partition'' step of quick sort requires at least $\Omega(\log n)$ simultaneous flips.
 
\begin{theorem}
\label{thm-simlb}
Let $\T$ be a canonical triangulation on $n$ vertices with all diagonals in the left half coloured red, and all edges in the right half coloured blue. Let $\T'$ be the same canonical triangulation, with the colours interchanged. Transforming $\T$ to $\T'$ requires at least $\Omega(\log n)$ simultaneous flips.
\end{theorem}
\begin{proof}
Consider the line $L$ that passes through $p_1$ and separates the red edges from the blue ones in $\T$. For each edge, the side of $L$ it inhabits in $\T$ is different from the side it inhabits in $\T'$. A single flip cannot replace an edge that lies completely to the left of $L$ with an edge that lies completely to the right. Thus, in any simultaneous flip-sequence that transforms $\T$ to $\T'$, for any edge $e$, there must be a triangulation where $e$ intersects $L$.

Consider any simultaneous flip-sequence $\mathcal{F}$ that transforms $\T$ to $\T'$, i.e., $\mathcal{F}$ is the sequence $(\T = \T_1, \T_2,\ldots , \T_{k-1}, \T_k=\T')$. For any $j\in [1..k]$, let $c_j$ be the number of edges $e$ for which there exists a $\T_r$ with $r\leq j$ such that $e$ intersects $L$ in $T_r$. From the argument above, it is clear that $c_1 = 0$ and $c_k = n$. We claim that for all $j\in [1..k-1]$, $c_{j+1} \leq 2c_j+1$. This shows that $k\geq \Omega(\log n)$.

To see why the claim is true, consider the $j\ith$ simultaneous flip in the sequence. It makes $\Delta_j = c_{j+1}-c_j$ new edges cross $L$.
Each individual flip happens in its own quadrilateral that has exactly two boundary-edges that intersect $L$.
%Each flip in the simultaneous flip happens in its own quadrilateral and exactly two boundary-edges of each quadrilateral intersect $L$.
Since two adjacent quadrilaterals can share an edge and since the quadrilateral at the top and bottom share their boundaries with the polygon, the total number of quadrilateral edges that cross $L$ in $\T_j$ is at least $\Delta_j-1$. But this number is also at most $c_j$. Therefore $\Delta_j-1 \leq c_j$, which means $c_{j+1} \leq 2c_j+1$.
\end{proof}

%%%%%%%%%%%%%%%%%%%%%%%%%%%%%%
\section{Connections between flips and sorting models}
\label{sec:sorting}

Our upper bounds in the previous section depend on bounds for sorting permutations.
In this section, we study the connection in the other direction, showing that our lower bound on the diameter of the flip graph implies an $\Omega(n \log n)$ lower bound on the cost of sorting in \prc{various} length-weighted sorting models.

Sorting permutations of $[1..n]$ using %operations from 
a restricted set of operations has been widely studied. 
A main operation is the reversal of a subsequence.
One of the earliest results of this kind was on ``pancake'' sorting~\cite{GP79}, where a prefix of the sequence can be reversed.
Bubble sort also fits into this model, as it operates by swapping two adjacent elements at a time, which is a reversal of size two. The number of size-two reversals it makes is equal to the number of inversions of the permutation and is $\Theta(n^2)$
%~\cite{CLR01} 
in the worst case. More generally, any permutation can be sorted with $\Theta(n)$ reversals~\cite{KS97} if arbitrarily large reversals are allowed. 
%Computing the minimum number of operations (of a specific kind) required to sort a permutation is of interest in bioinformatics. In the case of arbitrarily large reversals, this problem is NP-complete~\cite{Carprara03} for sorting permutations, but is in P for sorting \emph{signed} permutations~\cite{BMY01}. 
%Some other kinds of operations that have been considered include transpositions and block interchanges~\cite{DRS10}.
Most of the results until now have been about the \emph{number} of operations required. Recently, Pinter and Skiena~\cite{PS02} formulated a model that 
allows reversals of any size, but assigns them a cost proportional to the length of the reversed subsequence.
%assigns a cost to each operation based on the length of the interval in which it is performed.
This model has applications in comparative genomics, where it models a sequence of mutations in the evolution of a chromosome.
%The question, then, is about finding the worst-case \emph{cost} of sorting. 
Bender et al.~\cite{BGHHPSS04} showed that any input sequence can be sorted in this model with worst-case cost $O(n \log^2 n)$, and gave an $\Omega(n \log n)$ lower bound.
%Improved bounds for this model were given by Bender et al.~\cite{BGHHPSS04}. 
%Bender et al. proved improved bounds in this model~\cite{BGHHPSS04}.

%It was introduced by Pinter and Skiena~\cite{PS02} and later studied by Bender~\etal~\cite{BGHHPSS04} who showed how to sort any input sequence in this model with worst-case cost $O(n \log^2 n)$, and gave an $\Omega(n \log n)$ lower bound.

%In this section we show that the approach of Lemma~\ref{lemma:noncontiguous} gives a more general result about sorting using reversals of non-continguous sequences.

\begin{figure}[htb]
\centering
\includegraphics{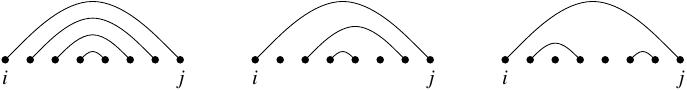}
\caption{An illustration of the swaps permitted in the contiguous reversals model (left), the non-contiguous reversals model (middle), and the non-crossing swaps model (right).}
\label{fig:sorting-models}
\end{figure}

We present a more general length-weighted sorting model, based on swaps.
%Consider the following general model of sorting.
Given a permutation $\rho$, pick an interval $[i..j]$ with $1 \leq i < j \leq n$ and $j - i = k$ and perform a set of swaps in it.
The cost of this operation is $O(k)$.
Adding constraints on the kinds of swaps allowed inside the interval leads to more specific models (see~Figure~\ref{fig:sorting-models}):

\begin{description}
 \item[Contiguous reversals] Reverse the entire interval $[i..j]$.
 \item[Non-contiguous reversals] Pick a subsequence $i_1, \ldots , i_m$ of $[i..j]$ and reverse its elements, without changing the position of any element outside the subsequence.
 \item[Non-crossing swaps] Swap a set $\{(i_1, j_1), \ldots , (i_m, j_m)\}$ of pairs inside $[i..j]$ such that there do not exist $x, y \in [1..m]$ with $x < y$ for which $i_x \leq i_y \leq j_x \leq j_y$, i.e, either the two intervals $[i_x, j_x]$ and $[i_y, j_y]$ nest or they are disjoint.
\end{description}

\prc{To the best of our knowledge, the latter two models are novel.}
Of the three models presented, the non-crossing swaps model is the most general, as it can simulate both other models.
In fact, the non-crossing swaps model is more general than (length-weighted versions of) a large number of sorting models~\cite[Chapter 3]{guillaume2009combinatorics}. 
We use our results on flips to give an $\Omega(n \log n)$ lower bound on the worst-case cost of sorting in the non-crossing swaps model and therefore in all of the sorting models that can be simulated by it.
\prc{Note that our proof of Theorem~\ref{thm-upper-bound} can be interpreted as an $O(n \log n)$ upper bound on the cost of sorting in the non-contiguous reversals model, and thereby also in the non-crossing swaps model.}
We first show how to simulate a non-crossing swap with flips.
The proof is very similar to the proof of Lemma~\ref{lemma:noncontiguous}.

\begin{lemma}
Let $\rho$ and $\rho'$ be two permutations of $[1..n]$ such that $\rho'$ can be obtained from $\rho$ using one set of non-crossing swaps in an interval of length $k$. Then the canonical triangulation $(T^*, \rho)$ can be flipped to $(T^*, \rho')$ with $O(k)$ flips.
\end{lemma}
\begin{proof}
Let $S = [i..j]$ be the sequence of edges in which we perform the swaps. We proceed by induction on $|S|$. In the base case, $|S| = 0$ and we are done, so let $|S| = k$ and assume that the lemma holds for all $|S| < k$. If there is an edge $e$ in $S$ that is not involved in any swap, we flip it. This results in a smaller convex polygon and reduces $|S|$ by one, so we can perform the swaps there by induction and flip $e$ back afterwards.

If every edge is involved in a swap, consider a swap $x = (i_x, j_x)$ such that no other swap is nested inside $x$. The edges $i_x$ and $j_x$ must be consecutive, since all edges are involved in some swap, and swaps do not cross. Thus, they form a convex pentagon and can be swapped with five flips, as shown in Figure~\ref{fig:pentagonswap}. But instead of completing this swap sequence, we halt it after the first three flips. This leaves the edges $i_x$ and $j_x$ out of the way of the remaining edges, so we can perform the rest of the swaps by induction. Afterwards, we perform the final two flips to return $i_x$ and $j_x$. Thus, we can perform each swap with 5 flips, giving a total cost of $O(k)$.
\end{proof}

Combining this lemma with Theorem~\ref{thm:el-convex-lb} gives the desired lower bound.

\begin{theorem}
\label{thm:noncrossing}
The worst-case cost of sorting a permutation of $[1..n]$ in the non-crossing swaps model is $\Omega(n\log n)$.
\end{theorem}

\section{Spiral polygons}
\label{sec-spiral}

In this section, we prove the Orbit Conjecture  for spiral polygons, even when there are multiple orbits. We provide a complete characterization of the orbits, and prove a tight quadratic bound on the diameter of each connected component of the flip graph. Finally, we show how to test for the condition of the Orbit Conjecture in linear time.

\prc{A vertex of a polygon is a \emph{convex} vertex if the interior angle at the vertex is less than $\pi$, otherwise it is \emph{reflex}. A \emph{spiral polygon} is a polygon in which all reflex vertices are consecutive along the boundary.} For a spiral polygon $P$ with $m$ convex and $k$ reflex vertices, let $C = c_1c_2\ldots c_m$ denote the convex chain \Abluenote{in clockwise order} and $R = r_1r_2\ldots r_k$ denote the reflex chain \Abluenote{in counterclockwise order} such that $c_1$ and $r_1$ are adjacent, and $c_m$ and $r_k$ are adjacent. For any vertex $c$ on the convex chain (resp. reflex chain), let $V(c)$ be the set of vertices on the reflex chain (resp. convex chain) that are visible from $c$.

\begin{figure}[htb]
 \centering
 \begin{subfigure}[b]{0.48\textwidth}
  \centering
  \includegraphics{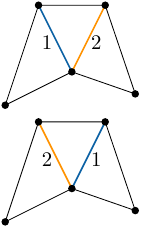}
  \caption{}
  \label{fig:spiral-disconnected}
 \end{subfigure}
 \begin{subfigure}[b]{0.48\textwidth}
  \centering
  \includegraphics{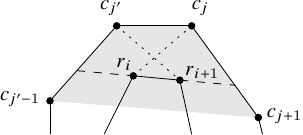}
  \caption{}
  \label{fig:proof-large-intersection}
 \end{subfigure}
 \caption{(a) Two edge-labelled triangulations of the same spiral polygon that cannot be transformed into each other by flips. (b) If \Abluenote{$V(r_i) \cap V(r_{i+1}) = \{c_{j}, c_{j'} \}$ then \prc{the} convex quadrilateral $c_{j-1}, c_{j}, c_{j'}, c_{j'+1}$} is not empty.}
\end{figure}

Consider a triangulated 5-vertex polygon with a single reflex vertex (see~Figure~\ref{fig:spiral-disconnected}). The two diagonals cannot be swapped with flips, thereby implying that the flip graph is disconnected. 
\Abluenote{The same thing happens with any spiral polygon with only 3 convex vertices---there is a unique triangulation and the flip graph is either a singleton (if there is at most one reflex vertex), or disconnected.
We now characterize when the flip graph is connected.
We only consider spiral polygons with at least 4 convex vertices.
A spiral polygon is \emph{locally convex} if it has at least four convex vertices and for every four consecutive vertices $a$, $b$, $c$, and $d$ on the convex chain, the %convex 
quadrilateral $abcd$ is empty.
Note that it is possible for four consecutive convex vertices to form a non-convex quadrilateral, but then the quadrilateral cannot be empty.
} 
Local convexity is the key to the characterization of the orbits and connectivity of the flip graph.

\begin{theorem}
\label{theorem-spiral-characterization}
For \Abluenote{an $n$-vertex spiral polygon $\mc{P}$ with at least 4 convex vertices}
%$n > 4$ vertices, 
the flip graph, $FG_{\mc{P}}$, is connected if and only if $\mc{P}$ is locally convex.
Furthermore, the diameter of each connected component of $FG_{\mc{P}}$ is $O(n^2)$.
% and there are cases when $FG_{\mc{P}}$ has a diameter of $\Omega(n^2)$.
\end{theorem}

We start by showing that  if $\mc{P}$ is locally convex, then its flip graph is connected. Observe the following:
\begin{lemma}
\label{lemma-locally-convex}
 If $\mc{P}$ is locally convex, then for any two consecutive vertices $r_i, r_{i+1}\in R$ on the reflex chain, $|V(r_i)\cap V(r_{i+1})| \geq 3$.
\end{lemma}
\begin{proof}
\Abluenote{Extend the segment $r_i, r_{i+1}$ in both directions until it hits the convex chain and exits the polygon.   Let $c_j, \ldots, c_{j'}$ be the vertices of the convex chain between these two exit points, noting that $j>1$ and $j' < m$.  Observe that $V(r_i)\cap V(r_{i+1}) = \{c_j, \ldots, c_{j'} \}$.  

Suppose that $|V(r_i)\cap V(r_{i+1})| \leq 2$.  We will show that $\mc{P}$ is not locally convex.  Consider the vertices $c_{j-1}, c_j, \ldots, c_{j'}, c_{j' +1}$.  They form a quadrilateral or triangle that contains $r_i$ and $r_{i+1}$.  If they form a quadrilateral, this violates local convexity, and if they form a triangle then we can extend it to a quadrilateral that violates local convexity.
}
\remove{
% note that this was not a proof by contradiction -- just using the contrapositive
Suppose, for contradiction, that $|V(r_i)\cap V(r_{i+1})| \leq 2$. Let $c_j$ be the rightmost vertex of $V(r_i)$ and $c_{j'}$ be the leftmost vertex of $V(r_{i+1})$ (See Figure~\ref{fig:proof-large-intersection}). Since any polygon can be triangulated, there must be a convex vertex visible to both $r_i$ and $r_{i+1}$, so we have that either $j = j'$ or $j'=j-1$. Consider the vertices $c_{j'-1}$ and $c_{j+1}$. These must exist, since $c_1$ and $c_m$ are visible only from reflex vertices $r_1$ and $r_k$, respectively. Now consider the infinite ray starting at $r_{i+1}$ in the direction of $r_{i+1}c_{j'}$ and rotate it towards $c_{j'-1}$ until it hits a point inside the triangle $r_{i+1}c_{j'}c_{j'-1}$. Since $r_{i+1}$ does not see $c_{j'-1}$, this ray must hit $r_i$ before it hits $c_{j'-1}$. That means $r_i$ lies inside the triangle $r_{i+1}c_{j'}c_{j'-1}$. Similarly, $r_{i+1}$ lies inside the triangle $r_ic_jc_{j+1}$. This is possible only if both $r_i$ and $r_{i+1}$ lie inside the quadrilateral $c_{j'-1}c_{j'}c_jc_{j+1}$ (which is a triangle in case $j=j'$). But that would mean that the polygon was not locally convex.
}
\end{proof}

To prove connectivity of the flip graph, we define the following canonical triangulation $\T'$,
together with a partition of its diagonals into {\it fans}. 
The canonical triangulation will be defined by specifying the triangle incident to each edge of the polygon.
%Let each diagonal have a unique label from the set $\{1, \ldots, n\}$.
Let $r_ir_{i+1}$ be an edge of the reflex chain with $V(r_i)\cap V(r_{i+1}) = \{c_j,\ldots ,c_{j'}\}$.  The apex of the triangle on edge $r_ir_{i+1}$ is defined to be $c_{\left\lfloor (j+j')/2\right\rfloor}$.  
The diagonals of all such triangles incident to a given convex vertex $c$ constitute the \emph{convex fan} at $c$.  
See Figure~\ref{fig:convex-fan}.   
After this, each edge $c_jc_{j+1}$ has a unique reflex vertex that can be its apex.  We add these triangles to $\T'$.  The new diagonals incident to reflex vertex $r$ constitute the \emph{reflex fan} at $r$. 
See Figure~\ref{fig:reflex-fan}.
Note that every diagonal of $\T'$ connects a convex and a reflex vertex, and thus there is a natural ordering of the diagonals of $\T'$ \Abluenote{consistent with the ordering of the convex and reflex chains}.  We label the diagonals of $\T'$ with labels $\{1, \ldots, |E|\}$ in this order ($|E|$ is the number of diagonals in $\T'$).
The fans partition the diagonals of $\T'$ and there is a natural ordering of the fans 
$D_1, \ldots, D_t$ such that the diagonals of $D_i$ occur before the diagonals of $D_j$ if $i < j$. 
%To complete the triangulation, connect each edge $c_jc_{j+1}$ with the unique reflex vertex that can be its apex. Notice that this triangulation can be decomposed into an ordered sequence of maximal convex and reflex fans. A convex fan is a convex vertex adjacent to a contiguous sequence of reflex vertices and a reflex fan is a reflex vertex adjacent adjacent to a contiguous sequence of convex vertices. The ordering of the fans induces an ordering on the diagonals. Let the ordering on the edge labels be the sorted order. 
%
%Finally, we 
We now provide a flip sequence to move from a given edge-labelled triangulation $\T$ to the canonical triangulation $\T'$ using $O(n^2)$ flips. %To do so, it suffices to show how to move from $\T$ to the canonical triangulation.
As flips are reversible, this implies that we can transform any edge-labelled triangulation into any other.

\begin{figure}[htb]
 \centering
 \begin{subfigure}[b]{0.4\textwidth}
  \centering
  \includegraphics{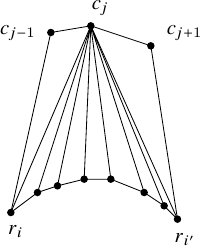}
  \caption{}
  \label{fig:convex-fan}
 \end{subfigure}
 \begin{subfigure}[b]{0.58\textwidth}
  \centering
  \includegraphics{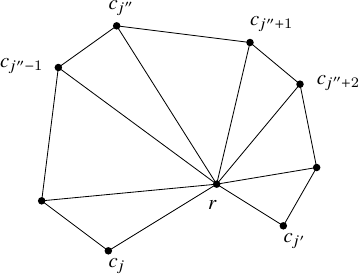}
  \caption{}
  \label{fig:reflex-fan}
 \end{subfigure}
 \caption{(a) A convex fan ($c_{j-1}$ and $c_{j+1}$ are not part of the fan). (b) A reflex fan.}
\end{figure}

We start by ignoring labels and transforming $\T$ into the unlabelled canonical triangulation with $O(n)$ flips~\cite{hanke1994ebene}. What remains is to sort the edge labels. Our flip sequence achieves this with $O(n^2)$ flips via insertion sort. 
If $d_1, \ldots, d_{|E|}$ is the ordered list of diagonals, the {\it insertion} of $d_j$ at $d_i$ for $i <j$ replaces the subsequence $d_i, \ldots, d_j$ by $d_j, d_i, d_{i+1}, \ldots, d_{j-1}$.
It suffices to show how to perform a single insertion using $O(n)$ flips. Based on the structure of the canonical triangulation, it is sufficient to show how to perform an insertion within 
one fan and between two consecutive fans
\Abluenote{with a number of flips proportional to the sizes of the fans}.

\begin{lemma}
\label{lemma-insertion-fan}
For each fan $D_i$, insertion between two edges of $D_i$ can be performed using $O(|D_i|)$ flips. In addition, swapping the last edge of $D_i$ with the first edge of $D_{i+1}$ can be performed using  $O(|D_i|+|D_{i+1}|)$ flips.
\end{lemma}
\begin{proof}
We first show how to perform an insertion in a convex fan $D_i$ (see Figure~\ref{fig:convex-fan}). Let $c_j$ be the common convex vertex and 
\Abluenote{$r_i,\ldots,r_{i'}$} be the sequence of reflex vertices. 
%$r_a,\ldots,r_{a+b}$ be the sequence of reflex vertices.   -- and notation changed for the rest of this paragraph and the next
This means that $c_j$ was chosen to be the apex vertex for each edge of the reflex chain between $r_i$ and $r_{i'}$ during the construction of the canonical triangulation. Using Lemma~\ref{lemma-locally-convex}, together with the fact that the apex is always chosen to be in the ``middle'' of the set of visible vertices, we get that $c_{j-1}$ and $c_{j+1}$ must exist and be visible to all reflex vertices of $D_i$.  
\Abluenote{Also note that the canonical triangulation contains diagonals $r_i c_{j-1}$ and $r_{i'} c_{j+1}$.}

We show how to insert diagonal \Abluenote{$c_jr_k$ to $c_jr_\ell$ for any $i \le \ell < k \le i'$}. First, flip all diagonals intersecting segment $r_{k}c_{j+1}$ and all diagonals intersecting $r_{k-1}c_{j-1}$. This results in the convex pentagon $r_{k-1}r_{k}c_{j+1}c_jc_{j-1}.$ Swap the two diagonals of this pentagon with five flips as shown in Figure~\ref{fig:pentagonswap} in Section~\ref{sec:convex}. Notice that this has moved $c_jr_{k}$ to $c_jr_{k-1}$. To move it to $c_jr_{k-2}$, we flip a constant number of edges to get the convex pentagon $r_{k-2}r_{k-1}c_{j+1}c_jc_{j-1}$. We continue in this way until $c_jr_{k}$ has reached $c_jr_{\ell}$. At that point, we flip all diagonals adjacent to $c_{j-1}$ and $c_{j+1}$ back to their original positions. Notice that the order of these diagonals is preserved and this results in the insertion of $c_jr_{k}$ to $c_jr_{\ell}$, using a total of $O(|D_i|)$ flips.

If $D_i$ is a reflex fan (Figure~\ref{fig:reflex-fan}), let $r$ be its common reflex vertex and 
\Abluenote{let $c_j , \ldots, c_{j'}$ be its}
%$C_{j,j'}$ be the set of 
convex vertices.
Consider two consecutive diagonals  $rc_{j''}$ and $rc_{j''+1}$ of $D_i$.  These form a triangle $t$ with apex $r$.
Consider the triangle before $t$.  If it had  
apex $c_{j''}$ then the diagonal $rc_{j''}$ would have been placed in the convex fan $D_{i-1}$.  Therefore it has apex $r$ and includes the diagonal (or boundary edge) $r c_{j''-1}$.  Similarly, the triangle after $t$ must have apex $r$ as well and includes the diagonal (or boundary edge) $r c_{j''+2}$.  
Now consider the pentagon formed by $r$ and the chain $c_{j''-1}\ldots c_{j''+2}$. Since $\mc{P}$ is locally convex, this pentagon must be empty and convex. Thus we can swap $rc_{j''}$ and $rc_{j''+1}$ with five flips as before. 
This shows that any two consecutive diagonals of a reflex fan can be swapped using five flips. Using a sequence of such swaps, we can perform any insertion with $O(|D_i|)$ flips.

Finally, we show how to swap %an edge between two fans.
the last edge $d$ of fan $D_i$ with the first edge $d'$ of fan $D_{i+1}$.
We consider four cases. The case when $D_i$ and $D_{i+1}$ are both convex fans is shown in Figure~\ref{figure-adjacent-insertion-a} and the case when $D_i$ is convex and $D_{i+1}$ is reflex is shown in Figure~\ref{figure-adjacent-insertion-b}. In both cases, let $d$ and $d'$ share vertex $r$. Our strategy is to perform a linear number of flips in the convex fans to get an empty pentagon 
%$efghr$ 
where $d$ and $d'$ can be swapped. 
\Abluenote{The pentagon consists of $r$ together with 4 consecutive vertices $efgh$ on the convex chain that are visible from $r$.  Local convexity implies that the pentagon is empty and convex.}

The case when $D_{i+1}$ is convex and $D_i$ is reflex is symmetric.
% to the case when $D_i$ is convex and $D_{i+1}$ reflex. 
Finally, it is not possible to have both $D_i$ and $D_{i+1}$ reflex because a reflex fan was defined to contain all the diagonals between two convex fans.
% \remove{Suppose, for contradiction, that they are both reflex and let $d$ be the rightmost diagonal of $D_i$ and $d'$ be the leftmost diagonal of $D_{i+1}$. Clearly, $d$ and $d'$ share an endpoint, which cannot be a reflex vertex, otherwise all diagonals of $D_i\cup D_{i+1}$ would share the same endpoint and they would not be in different sets. If $d$ and $d'$ share a convex vertex $c$, then that convex vertex is incident to at least two diagonals and hence our procedure for constructing $T$ would have put them both into the convex fan corresponding to $c$.
% }
\end{proof}

\begin{figure}[htb]
 \centering
 \begin{subfigure}[b]{\textwidth}
  \centering
  \includegraphics{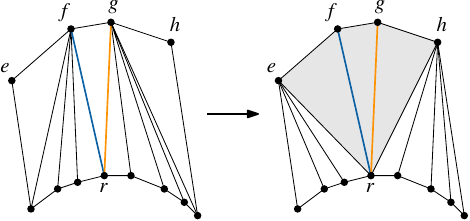}
  \caption{}
  \label{figure-adjacent-insertion-a}
 \end{subfigure}
 \begin{subfigure}[b]{\textwidth}
  \centering
  \includegraphics{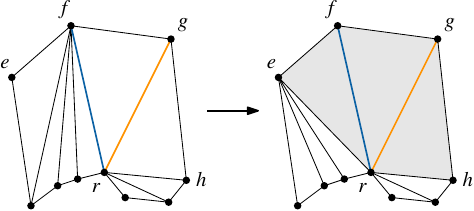}
  \caption{}
  \label{figure-adjacent-insertion-b}
 \end{subfigure}
 \caption{Swapping edges between adjacent fans $D_i$ and $D_{i+1}$ using the pentagon $efghr$.
 (a) $D_i$ and $D_{i+1}$ are both convex fans. (b) $D_i$ is a convex fan but $D_{i+1}$ is a reflex fan.}
\end{figure}

Using Lemma~\ref{lemma-insertion-fan}, we can reorder the labels of the edges in a locally convex spiral polygon with $O(n^2)$ flips by simulating insertion sort. What remains to be shown is that if the flip graph is connected then $\mc{P}$ is locally convex. We show this in the contrapositive.

%The following lemma is useful.
%\begin{lemma}
%\label{lemma-non-empty-certificate}
%Let $c_j, c_{j+1}, c_{j+2}, c_{j+3}$ be four consecutive vertices on $C$ such that the quadrilateral $c_jc_{j+1}c_{j+2}c_{j+3}$ is not empty. Let $r_i$ be the rightmost vertex of $R$ that is visible from $c_j$. Then $r_i$ must lie inside the quadrilateral $c_jc_{j+1}c_{j+2}c_{j+3}$. 
%\end{lemma}
%\begin{proof}
%Consider the ray starting at $c_j$ and extending in the direction of $c_jc_{j+1}$ and rotate it clockwise around $c_j$ (Figure~\ref{figure-non-empty-certificate}(a)). Let $r_i$ be the first vertex inside the quadrilateral $c_jc_{j+1}c_{j+2}c_{j+3}$ that it hits. If it hits more than one vertex of $R$ simultaneously, then pick the leftmost one to be $r_i$. It is clear that $r_i$ can see $c_j$ but $r_{i+1}$ cannot. Thus $r_i$ must be the rightmost vertex of $R$ visible to $c_j$. 
%\end{proof}

\begin{figure}[htb]
 \centering
 \begin{subfigure}[b]{0.48\textwidth}
  \centering
  \includegraphics{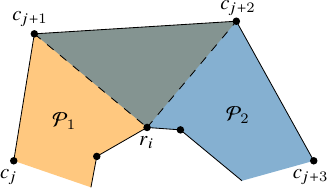}
  \caption{}
  \label{figure-non-empty-certificate-a}
 \end{subfigure}
 \begin{subfigure}[b]{0.48\textwidth}
  \centering
  \includegraphics{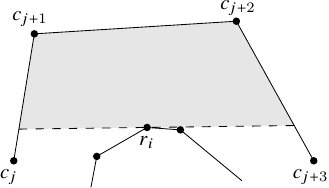}
  \caption{}
  \label{figure-non-empty-certificate-b}
 \end{subfigure}
 \caption{(a) Decomposition of a spiral polygon that is not locally convex into two sub-polygons. 
(b) To find $r_i$, we move a segment from $c_{j+1}c_{j+2}$ to $c_{j} c_{j+3}$, stopping when we hit the reflex chain.}
\end{figure}

% Finally, we show exactly how $QG$ is disconnected if $P$ is not locally convex.

\begin{lemma}
\label{lemma-decomposition}
If there exist four consecutive vertices $c_j$, $c_{j+1}$, $c_{j+2}$, $c_{j+3}$ on $C$ that form a non-empty quadrilateral, then there exist two spiral polygons $\mc{P}_1$ and $\mc{P}_2$, such that: (a) each boundary edge of $\mc{P}_1$ and $\mc{P}_2$ is either a boundary edge or a diagonal of $\mc{P}$, (b) each diagonal of $\mc{P}$ is a diagonal of either $\mc{P}_1$ or $\mc{P}_2$, and (c) no diagonal of $\mc{P}_1$ can flip to a diagonal of $\mc{P}_2$.
\end{lemma}
\begin{proof}
\Anote{
We claim that there is a vertex $r_i$ on $R$ that sees both $c_{j+1}$ and $c_{j+2}$, and such that $c_j$ does not see $r_{i+1}$ and $c_{j+3}$ does not see $r_{i-1}$.  (Justification below.)
Given such a vertex $r_i$, let $\mc{P}_1$ and $\mc{P}_2$ be the spiral polygons defined by the chains $c_1, \ldots, c_{j+2}$ and $r_1, \ldots, r_i$, and $c_{j+1}, \ldots, c_{m}$  and $r_i, \ldots, r_k$, respectively.  
\Abluenote{Then $\mc{P}_1$ and $\mc{P}_2$ overlap on triangle $r_i c_{j+1} c_{j+2}$.} See Figure~\ref{figure-non-empty-certificate-a}.

% Let $r_x$ be a vertex on $R$ that sees both $c_{j+1}$ and $c_{j+2}$, such that $c_j$ does not see $r_{x+1}$ and $c_{j+3}$ does not see $r_{x-1}$. Let $\mc{P}_1$ and $\mc{P}_2$ be the spiral polygons defined by the chains $c_1, \ldots, c_{j+2}$ and $r_1, \ldots, r_x$, and $c_{j+1}, \ldots, c_{m}$  and $r_x, \ldots, r_k$, respectively.
%We first show that $\mc{P}_1$ and $\mc{P}_2$ meet all conditions.
 
It is easy to see that $\mc{P}_1$ and $\mc{P}_2$ satisfy condition (a), as $c_{j+1}r_i$ and $c_{j+2}r_i$ are diagonals of $\mc{P}$. For condition (b) and (c), we note the following. By construction, 
there is no diagonal between a vertex of $\mc{P}_1 - \mc{P}_2$ and a vertex of $\mc{P}_2 - \mc{P}_1$.
%no diagonal exists between a vertex on the convex chain of $\mc{P}_1$ (resp. $\mc{P}_2$) and reflex chain of $\mc{P}_2$ (resp. $\mc{P}_1$). 
This implies that the diagonals of $\mc{P}$ are partitioned between $\mc{P}_1$ and $\mc{P}_2$. For condition (c), suppose a diagonal of $\mc{P}_1$ flipped to one in $\mc{P}_2$. Then they must intersect, 
so they must be incident to $c_{j+2}$ and $c_{j+1}$ respectively.  But then they cannot form an empty convex quadrilateral.
}
%Since both diagonals must have endpoints on the convex chain, the quadrilateral formed by their endpoints cannot be empty.

To find $r_i$, imagine the continuous motion of a segment whose initial position is  $c_{j+1}c_{j+2}$ and whose final position is $c_jc_{j+3}$ \prc{(see Figure~\ref{figure-non-empty-certificate-b})}.  One endpoint of the segment moves at uniform speed in a straight line from $c_{j+1}$ to $c_j$, and the other endpoint moves at uniform speed in a straight line from $c_{j+2}$ to $c_{j+3}$.  The initial segment does not intersect the reflex chain; the final segment does.  Therefore, by continuity, some intermediate segment, say $pq$, lies inside the polygon but contains a reflex vertex (or possibly two).  We choose one such vertex to be $r_i$.  The area swept by the segment until reaching $pq$ forms an empty convex quadrilateral with $c_{j+1}$, $c_{j+2}$, and $r_i$ on the boundary. Thus, $r_i$ sees both $c_{j+1}$ and $c_{j+2}$. Furthermore, segment $pq$ is a chord of $\mc{P}$ that goes through a reflex vertex, namely $r_i$. It is well known\footnote{See, for instance, Bose~\etal~\cite{bose2002efficient}, Lemma 1.} that such a chord partitions $\mc{P}$ into three subpolygons $\mc{P}_A$, $\mc{P}_B$, and $\mc{P}_C$ such that no two vertices in the non-adjacent subpolygons $\mc{P}_A$ and $\mc{P}_C$ can see each other. As $c_j \in \mc{P}_A$ and $r_{i+1} \in \mc{P}_C$, this implies that $c_j$ cannot see $r_{i+1}$. Analogously, $c_{j+3}$ cannot see $r_{i-1}$, showing that $r_i$ has all the properties we were looking for.
\remove{
% Sander's construction for r_x 
To find $r_x$, we can take the line $\ell$ through the edge $c_{j+1}c_{j+2}$ and sweep it towards the interior of the quadrilateral, until we hit a vertex in, or on the boundary of the quadrilateral. 
See Fig.~\ref{figure-non-empty-certificate}(b,c).
If we hit a vertex in the interior, this is $r_x$. Otherwise, we hit either $c_j$ or $c_{j+3}$. In this case, we stop the sweep and instead rotate $\ell$ towards $c_jc_{j+3}$ until we hit a vertex in the interior of the quadrilateral, which is $r_x$. We are guaranteed to hit a vertex before reaching $c_jc_{j+3}$, because the quadrilateral is non-empty. In either case, the area swept by $\ell$ forms an empty convex quadrilateral with $c_{j+1}$, $c_{j+2}$, and $r_x$ on the boundary. Thus, $r_x$ sees both $c_{j+1}$ and $c_{j+2}$. Furthermore, the part of $\ell$ inside the quadrilateral is a chord of $\mc{P}$ that goes through a reflex vertex, namely $r_x$. It is well known\footnote{See, for instance, Bose~\etal~\cite{bose2002efficient}, Lemma 1.} that such a chord partitions $\mc{P}$ into three subpolygons $\mc{P}_A$, $\mc{P}_B$, and $\mc{P}_C$ such that no two vertices in the non-adjacent subpolygons $\mc{P}_A$ and $\mc{P}_C$ can see each other. As $c_j \in \mc{P}_A$ and $r_{x+1} \in \mc{P}_C$, this implies that $c_j$ cannot see $r_{x+1}$. Analogously, $c_{j+3}$ cannot see $r_{x-1}$, showing that $r_x$ has all the properties we were looking for.
}
%%%% Original full proof %%%%
\remove{
Let $r_x$ be a vertex on $R$ that sees both $c_{j+1}$ and $c_{j+2}$, such that $c_j$ does not see $r_{x+1}$ and $c_{j+3}$ does not see $r_{x-1}$. Let $\mc{P}_1$ and $\mc{P}_2$ be the spiral polygons defined by the chains $C_{1, j+2}$ and $R_{1, x}$, and $C_{j+1, m}$ and $R_{x, k}$, respectively. We first show that $\mc{P}_1$ and $\mc{P}_2$ meet all conditions, then we show that we can always find a vertex $r_x$.
 
 It is easy to see that $\mc{P}_1$ and $\mc{P}_2$ satisfy condition (a), as $c_{j+1}r_x$ and $c_{j+2}r_x$ are diagonals of $\mc{P}$. Condition (b) has two parts. First, there are no diagonals shared between $\mc{P}_1$ and $\mc{P}_2$, because the diagonals defined by the shared vertices ($c_{j+1}$, $c_{j+2}$, and $r_x$) are all on the boundary of either $\mc{P}_1$ or $\mc{P}_2$. Second, since $c_j$ does not see $r_{x+1}$ or $c_{j+3}$ and visibility is monotonic, no diagonal can connect a vertex on $C_{1, j}$ to either $C_{j+3, m}$ or $R_{x+1, k}$. Likewise, since $c_{j+3}$ does not see $r_{x-1}$, no diagonal can connect $C_{j+3,m}$ to $R_{1, x-1}$. Thus, the two endpoints of each diagonal of $\mc{P}$ must be in the same polygon. Finally, to prove that $\mc{P}_1$ and $\mc{P}_2$ meet condition (c), consider two diagonals, $e_1$ of $\mc{P}_1$ and $e_2$ of $\mc{P}_2$, such that $e_1$ flips to $e_2$. Then $e_1$ and $e_2$ must intersect properly. Since $e_1$ lies in the interior of $\mc{P}_1$ and $e_2$ lies in the interior of $\mc{P}_2$, their intersection point must lie in $\mc{P}_1 \cup \mc{P}_2$: the triangle $X = c_{j+1}r_xc_{j+2}$. As $X$ forms an ear in both $\mc{P}_1$ and $\mc{P}_2$, this means that $c_{j+2}$ is an endpoint of $e_1$, and $c_{j+1}$ is an endpoint of $e_2$. Note that $r_x$ cannot be an endpoint of either edge, as $c_{j+1}r_x$ and $c_{j+2}r_x$ do not lie in the interior of $X$. Let $a \in \mc{P}_1$ and $b \in \mc{P}_2$ be the other endpoints of $e_1$ and $e_2$. Then $ab$ must be a diagonal or boundary edge of $\mc{P}$. But for condition (b) we showed that such a diagonal cannot exist, and the boundary vertices of $\mc{P}_1$ and $\mc{P}_2$ are separated by $r_x$. Thus, no diagonal of $\mc{P}_1$ flips to a diagonal of $\mc{P}_2$.

 To find $r_x$, we take the line $\ell$ through the edge $c_{j+1}c_{j+2}$ and sweep it towards the interior of the quadrilateral, until we hit a vertex in, or on the boundary of the quadrilateral. If we hit a vertex in the interior, this is $r_x$. Otherwise, we hit either $c_j$ or $c_{j+3}$. In this case, we stop the sweep and instead rotate $\ell$ towards $c_jc_{j+3}$ until we hit a vertex in the interior of the quadrilateral, which is $r_x$. We are guaranteed to hit a vertex before reaching $c_jc_{j+3}$, because the quadrilateral is non-empty. In either case, the area swept by $\ell$ forms an empty convex quadrilateral with $c_{j+1}$, $c_{j+2}$, and $r_x$ on the boundary. Thus, $r_x$ sees both $c_{j+1}$ and $c_{j+2}$. Furthermore, the part of $\ell$ inside the quadrilateral is a chord of $\mc{P}$ that goes through a reflex vertex -- $r_x$. It is well known\footnote{See, for instance, Bose~\etal~\cite{bose2002efficient}, Lemma 1.} that such a chord partitions $\mc{P}$ into three subpolygons $A$, $B$, and $C$ such that no two vertices in the non-adjacent subpolygons $A$ and $C$ can see each other. As $c_j \in A$ and $r_{x+1} \in C$, this implies that $c_j$ cannot see $r_{x+1}$. Analogously, $c_{j+3}$ cannot see $r_{x-1}$, showing that $r_x$ has all the properties we were looking for.
}
\end{proof}

%Let $\mc{P}_1$ be the spiral polygon defined by the chains $C_{1, j+2}$ and $R_{1, i}$ (Figure~\ref{figure-non-empty-certificate}(b,c)). The definition of $\mc{P}_2$ and $\mc{P}_3$ depends on whether $c_{j+1}$ is visible from $r_{i+1}$. If it is, let $\mc{P}_3$ be the polygon defined by $C_{j+1, m}$ and $R_{i+1, k}$ and let $\mc{P}_2$ be the quadrilateral $c_{j+1}c_{j+2}r_{i+1}r_i$; otherwise let $\mc{P}_3$ be the polygon defined by $C_{j+1, m}$ and $R_{i, k}$ and let $\mc{P}_2$ be the empty polygon.
%
%With these definitions, it is easy to check that conditions (1) and (2) above are satisfied. For (3), suppose, for contradiction, that there are diagonals $d_s$ of $\mc{P}_s$ and $d_t$ of $\mc{P}_t$ for $s\neq t$ such that they are connected by an edge in $QG$. This means $d_s$ and $d_t$ must cross, which is possible only if $d_s$ is incident on $c_{j+1}$ and $d_t$ is incident on $c_{j+2}$. However, with this constraint, whatever we pick for the other endpoints of $d_s$ and $d_t$, the convex quadrilateral formed by their four endpoints will always contain $r_i$.

This concludes the proof of Theorem~\ref{theorem-spiral-characterization}. It also gives us a way to decide, in polynomial time, whether a given spiral polygon has a connected flip graph: for each quadrilateral formed by four consecutive vertices of the convex chain, check if it is empty. A naive implementation of this procedure will take $O(n^2)$ time.
%: there are $O(n)$ quadrilaterals to check and for each quadrilateral, there are at most $n$ points that could potentially lie inside it.   %deleted for space saving AL
However, we can reduce the running time to $O(n)$  
by walking simultaneously along the convex and reflex chains of the polygon and using the fact that visibility is monotonic.
%using Lemma~\ref{lemma-decomposition}. 
Thus we get the following theorem.

\begin{theorem}
\label{theorem-linear-time}
Given a spiral polygon $\mc{P}$ with $n$ vertices, we can decide in linear time whether the flip graph of the edge-labelled triangulations of $\mc{P}$ is connected.
\end{theorem}

%%%%%%%%%%%%%%%%%%%%%%%%%%%%%%%%%%%%%
%\subsection{Connectivity between two given triangulations}
%We start with the following lemma whose proof can be found in Appendix~\ref{appendix-full-decomposition}.
%\begin{lemma}
%\label{lemma-full-decomposition}
%Given any spiral polygon $\mc{P}$, there exists a set $\{\mc{P}_1\ddd\mc{P}_t\}$ of spiral polygons such that: (a) for each $\mc{P}_i$, each boundary edge of $\mc{P}_i$ is either a boundary edge or a diagonal of $\mc{P}$, (b) each diagonal of $\mc{P}$ is a diagonal of exactly one of $\mc{P}_1\ddd\mc{P}_t$, (c) no diagonal of $\mc{P}_i$ is connected by an edge in $QG$ to a diagonal of $\mc{P}_j$ for $i\neq j$, and (d) each $\mc{P}_i$ is locally convex.
%%\begin{enumerate}
%%\item For each $\mc{P}_i$, each boundary edge of $\mc{P}_i$ is either a boundary edge or a diagonal of $\mc{P}$.
%%\item Each diagonal of $\mc{P}$ is a diagonal of exactly one of $\mc{P}_1\ddd\mc{P}_t$.
%%\item No diagonal of $\mc{P}_i$ is connected by an edge in $QG$ to a diagonal of $\mc{P}_j$ for $i\neq j$.
%%\item Each $\mc{P}_i$ is locally convex.
%%\end{enumerate}
%\end{lemma}
%%\begin{proof}
%%If $\mc{P}$ is already locally convex, we are done. Otherwise $\mc{P}$ has four consecutive vertices on its convex chain that form a non-empty quadrilateral. Using Lemma~\ref{lemma-decomposition}, we get three polygons on which we recurse to obtain the desired set $\{\mc{P}_1\ddd\mc{P}_t\}$.
%%\end{proof}

Theorem \ref{theorem-spiral-characterization} gives a complete characterization of the orbits in a spiral polygon: The orbit of a diagonal $d$ in $\mc{P}$ is precisely the set of diagonals in the maximal locally convex subpolygon of $\mc{P}$ that contains $d$. 
With this characterization, we now prove the Orbit Conjecture for spiral polygons.

\begin{theorem}
Given a spiral polygon $\mc{P}$ with $n$ vertices and two of its edge-labelled triangulations $\T_1$ and $\T_2$, there exists a flip sequence that transforms $\T_1$ to $\T_2$ if and only if for each label $\lambda$, the diagonal of $\T_1$ with label $\lambda$ is in the same orbit as the diagonal of $\T_2$ with label $\lambda$. Moreover, we can decide in $O(n)$ time if such a flip sequence exists, and if it does, it is of length at most $O(n^2)$.
\end{theorem}
\begin{proof}
The `only if' direction is easy. For the `if' direction, we provide a flip sequence.
First ignore labels and transform both $\T_1$ and $\T_2$ into the unlabelled canonical triangulation $\T$. Then rearrange the labels in $\T$. If the diagonal with label $\lambda$ in $\T_1$ was in the same orbit as the diagonal with label $\lambda$ in $\T_2$, they must be diagonals of the same locally convex subpolygon $\mc{P}_i$. By applying Theorem \ref{theorem-spiral-characterization}, we can rearrange the labels inside each $\mc{P}_i$ with at most $O(|\mc{P}_i|^2)$ flips. We can find the decomposition into maximal locally convex subpolygons in linear time by walking simultaneously along the convex and reflex chains of the polygon, and exploiting the fact that visibility is monotonic. Once we have the decomposition, we can verify in constant time whether two edges with the same label are in the same orbit. 
\end{proof}

%It is easy to compute the maximal locally convex subpolygons in $O(n)$ time using a slight modification of the algorithm of Theorem~\ref{theorem-linear-time}. Using the decomposition, we can also check, in linear time, whether for each label $\lambda$, the diagonals of $\T_1$ and $\T_2$ with that label lie in the same polygon of the decomposition. This gives us the following.
%\begin{theorem}
%Given a spiral polygon $\mc{P}$ and two of its edge-labelled triangulations $\T_1$ and $\T_2$, one can decide in $O(n)$ time whether there exists a sequence of flips that transforms $\T_1$ to $\T_2$.
%\end{theorem}

\begin{figure}[htb]
\centering
\includegraphics{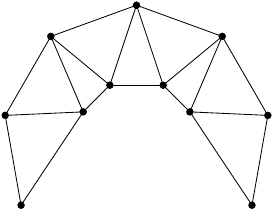}
\caption{A spiral polygon whose flip graph has diameter $\Theta(n^2)$.}
\label{fig:spiral-lowerbound}
\end{figure}

Finally, note that our quadratic bound on the flip distance is tight. Consider, for example, a locally convex spiral polygon for which every reflex vertex $r$ has $|V(r)|\leq 4$ (see Figure~\ref{fig:spiral-lowerbound}). Let $d_1\ddd d_{|E|}$ be the diagonals of the canonical triangulation of this polygon listed left-to-right. It is clear that the distance in the quadrilateral graph between any $d_i$ and $d_j$ is $\Omega(|j-i|)$. Thus, \prc{we can find $\Omega(n)$} pairs of diagonals at distance $\Omega(n)$ %$\Omega(n^2)$
in the quadrilateral graph. Since any flip sequence that moves a label from $d_i$ to $d_j$ must contain at least $\Omega(|j-i|)$ flips of that label, this gives a lower bound of $\Omega(n^2)$.

\begin{theorem}
There exists a spiral polygon with $n$ vertices that has a connected flip graph whose diameter is $\Omega(n^2)$.
\end{theorem}

\section{Lower bound on flips in edge-labelled polygons}
In this section, we provide an example of a polygon with two reflex chains where the diameter of the flip graph is $\Theta(n^3)$. The example also demonstrates a difficulty in generalizing our results on spiral polygons to more general polygons. 
Our example is based on a \emph{channel}~\cite{HNU99} that consists of two reflex chains visible to each other and joined by two edges.  
We define an \emph{augmented channel} to consist of two chains $A = a_1\ldots a_m$ and $B = b_1\ldots b_{m+2}$, as shown in Figure~\ref{figure-augmented-channel}, such that: (a) every vertex on $A$ is visible from every vertex on $B$, (b) the vertices on $A$ form a reflex chain, and (c) all vertices on $B$, except for $b_1$, $b_2$, $b_{m+1}$, and $b_{m+2}$ are reflex.
%We call our example an \emph{augmented channel}. It consists of two chains $A = a_1\ldots a_m$ and $B = b_1\ldots b_{m+2}$, as shown in Figure~\ref{figure-augmented-channel}, such that:
%\begin{itemize}
%\item[$\bullet$] Every vertex on $A$ is visible from every vertex on $B$.
%\item[$\bullet$] The vertices on $A$ form a reflex chain.
%\item[$\bullet$] All vertices on $B$, except for $b_1$, $b_2$, $b_{m+1}$, and $b_{m+2}$ are reflex.
%\end{itemize}

%The difference between an augmented channel and a channel is the presence of the two convex vertices $b_2$ and $b_{m+1}$ on $B$. 

\begin{figure}[htb]
 \centering
 \begin{subfigure}[b]{0.48\textwidth}
  \centering
  \includegraphics{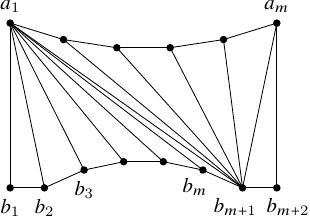}
  \caption{}
 \end{subfigure}
 \begin{subfigure}[b]{0.48\textwidth}
  \centering
  \includegraphics{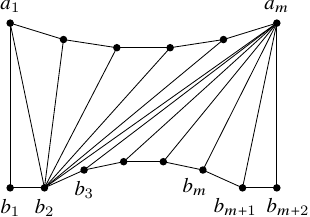}
  \caption{}
 \end{subfigure}
 \caption{(a) The left-inclined triangulation of an augmented channel. (b) The right-inclined triangulation of an augmented channel.}
 \label{figure-augmented-channel}
\end{figure}

\begin{theorem}
The diameter of the flip graph of an augmented channel is $\Theta(n^3)$.
\end{theorem}
\begin{proof}
We first show the upper bound. Any triangulation of the augmented channel induces an ordering on its diagonals as follows. If all diagonals connect a vertex on $A$ with a vertex on $B$, then the ordering induced is the same as the order in which a ray passing through the polygon from left to right will hit each diagonal. If, however, one or both of the diagonals $b_1b_3$ and $b_mb_{m+2}$ are present, then 
we take the ordering we would get by flipping those diagonals.
%(noting that the flipped diagonals connect $A$ and $B$).
%we assign to the diagonals the same ordering as we would in the triangulation obtained on flipping the ones that are present. 
The ordering on the diagonals defines an ordering on the labels.

Hurtado et al.~\cite[Theorem 3.8]{HNU99} showed that in the unlabelled setting, any triangulation of a channel can be transformed into any other triangulation of the channel using $O(n^2)$ flips.
We observe that their transformation preserves the ordering of the labels.
%
%Note also that given an ordering on the labels, all triangulations that induce that ordering can be transformed into each other with $O(n^2)$ flips (using Theorem 3.8 from Hurtado et al.~\cite{HNU99}). 
Thus for the $O(n^3)$ bound on the diameter of the flip graph, the challenge is in rearranging the labels. For any triangulation, let $D_1$ denote the set consisting of the first $m$ labels and $D_2$ denote the set consisting of the last $m$ labels. Since there are $2m-1$ labels in total, $D_1$ and $D_2$ will have one label in common, namely, the $m\textsuperscript{th}$ label.  
We will use two particular triangulations of the augmented channel: the \emph{left-inclined} and \emph{right-inclined} triangulations shown in Figure~\ref{figure-augmented-channel}. 
Note that in the left-inclined triangulation, %(Figure~\ref{figure-augmented-channel}(a)), 
$D_2$ is exactly the set of diagonals incident on $b_{m+1}$ and thus forms a convex fan. Similarly, in the right-inclined triangulation, % (Figure~\ref{figure-augmented-channel}(a)), 
$D_1$ forms a convex fan. This means both $D_1$ and $D_2$ can be sorted in $O(n^2)$ flips and any insertion in $D_1$ or $D_2$ can be performed in $O(n)$ flips using Lemma~\ref{lemma-insertion-fan}.

Thus our strategy is as follows. Given $\T_1$ and $\T_2$, first ignore labels and transform both into the right-inclined triangulation. Then, if $D_1$ contains a label bigger than $m$, insert it \prc{(using the spiral polygon formed by $b_1$, $b_2$, $b_3$, and chain $A$)} at $b_2a_m$ and transform the triangulation into a left-inclined triangulation while preserving the order of labels. Next, if $D_2$ contains a label smaller than $m$, insert the label at $a_1b_{m+1}$ and transform back into the right-inclined triangulation and repeat until $D_1$ only contains labels $1\ddd m$ and $D_2$ only contains labels $m\ddd 2m-1$. Finally, sort $D_1$ and then sort $D_2$. Since transforming between the left-inclined and right-inclined triangulations takes $O(n^2)$ flips and insertion inside $D_1$ or $D_2$ takes $O(n)$ flips, we get a bound of $O(n^3)$ flips.

To show the lower bound, we use the observation that to move a label from $D_2\setminus D_1$ to $D_1\setminus D_2$, we must first move that label from $D_2\setminus D_1$ to the $m^\textsuperscript{th}$ diagonal and then to $D_1\setminus D_2$. The first step can be performed only if the diagonal $a_1b_{m+1}$ is present and the second step can be carried out only if the diagonal $b_2a_m$ is present. Going from a triangulation that has the diagonal $a_1b_{m+1}$ to a triangulation that has the diagonal $b_2a_m$ requires $\Omega(n^2)$ flips using the argument from Hurtado et al.~\cite{HNU99}. Thus if we want to transform a triangulation with $D_1 = \{m\ddd 2m-1\}$ and $D_2=\{1\ddd m\}$ to a triangulation with $D_1 = \{1\ddd m\}$ and $D_2=\{m\ddd 2m-1\}$, we will need at least $\Omega(n^3)$ flips.
\end{proof}

The theorem above demonstrates that deciding whether, say, $a_1b_m$ and $a_1b_{m-1}$ are connected depends on the exact position of $b_2$ and $b_{m+1}$. In the example above they are connected, but if $b_2$ and $b_{m+1}$ were reflex vertices, they would not be.

\section{Combinatorial triangulations}

In this section, we prove the Orbit Conjecture for edge-labelled combinatorial triangulations and show that the diameter of the flip graph is $\Theta(n \log n)$. Note that we consider two edge-labelled triangulations to be equivalent if they have an isomorphism that preserves the edge labels (in other words, the vertices are unlabelled).
 
 \begin{figure}[htb]
  \centering
  \includegraphics{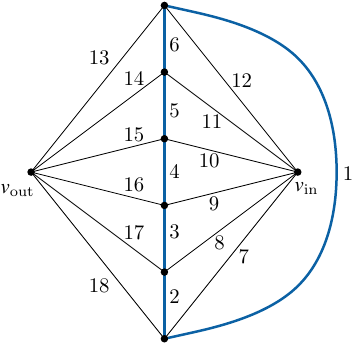}
  \caption{The labelled canonical triangulation on 8 vertices. The spine is indicated in bold.}
  \label{fig:comb-canonical}
 \end{figure}

\begin{figure}[p]
   \centering
   \includegraphics{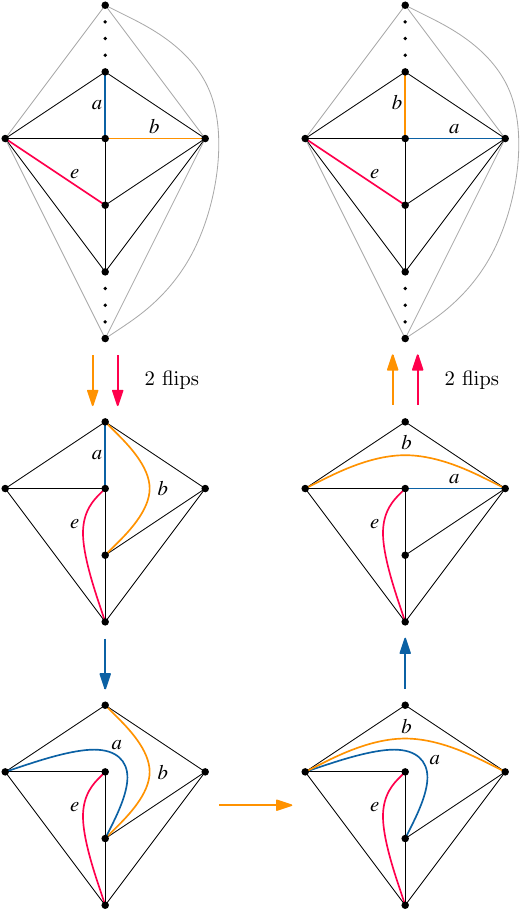}
   \caption{A sequence of seven flips that swaps two edges $a$ and $b$ that are consecutive around a vertex on the spine. Although edge $e$ ends up at the same place as at the start of the sequence, it essentially acts as a catalyst here. If we did not flip it, we would not be able to flip edge $a$ after edge $b$, as that would create a duplicate edge.}
   \label{fig:spineswap}
 \end{figure}

\begin{theorem}
\label{thm-upper-bound-comb}
 Any edge-labelled combinatorial triangulation with $n$ vertices can be transformed into any other by $O(n\log n)$ flips.
\end{theorem}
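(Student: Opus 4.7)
The plan is to mimic the two-phase strategy used for convex polygons in Theorem~\ref{thm-upper-bound}. First, ignoring labels, use $O(n)$ flips to bring both input triangulations to a common canonical triangulation $T^{**}$ on $n$ vertices — such as the ``spine'' triangulation of Figure~\ref{fig:comb-canonical}. Reachability in $O(n)$ unlabelled flips follows from the known linear bound for combinatorial triangulations~\cite{bose2012making}. After this step, both inputs are represented by permutations of the labels on the edges of $T^{**}$, and it suffices to show that any such permutation can be sorted on $T^{**}$ using $O(n\log n)$ flips.

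For the sorting phase, I would identify a distinguished linearly ordered family of edges in $T^{**}$ — the spine edges suggested by the figure — and show they play the same role that the diagonals of $T^*$ played for convex polygons. The first step is to prove an analogue of Lemma~\ref{lemma:noncontiguous}: for any contiguous block $S$ of spine edges and any ordered subsequence $R$ of $S$, the labels on $R$ can be reversed using $O(|S|)$ flips. The inductive argument is identical in structure to the convex case; the only ingredient that must be supplied locally is a constant-cost primitive that swaps the labels of two edges consecutive around a spine vertex, and this is provided by the gadget in Figure~\ref{fig:spineswap}. Because combinatorial flips require that the ``other diagonal'' not already be present, I would verify that the gadget never creates a multi-edge — this is a direct check on the local structure of the spine.

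With the non-contiguous reversal primitive in hand, the quicksort-style recursion from the convex polygon proof applies verbatim. Namely, to sort a permutation of length $n$ on a block of $n$ spine edges, identify the positions holding labels that are on the wrong half (either at an index $\leq n/2$ with value $>n/2$, or vice versa), reverse this subsequence at cost $O(n)$, and recurse on each half. This gives $T(n) = 2T(n/2) + O(n) = O(n\log n)$ flips. Labels on non-spine edges of $T^{**}$ — if the chosen canonical triangulation has any — can be sorted by interleaving them with the spine sort, or by arranging the canonical structure so that sorting the spine determines everything.

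The main obstacle I expect is pinning down the canonical triangulation carefully enough that both (a) the unlabelled $O(n)$-flip reduction lands on it, and (b) the spine supports the local swap gadget under the combinatorial flip constraint for every configuration that arises during the induction. The convex proof got both features for free from the presence of polygon boundary edges acting as a rigid outer frame; in the combinatorial setting one must choose the canonical shape and the spine so that the required quadrilaterals always exist and the opposite diagonals are never present. Once this local verification is done, the counting argument and the quicksort recursion carry over unchanged.
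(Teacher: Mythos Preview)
Your plan has a structural gap: the spine of the double wheel does not play the role that the fan of diagonals plays in the convex polygon, and the inductive proof of Lemma~\ref{lemma:noncontiguous} does not transfer to it. In the convex case the induction ``flips an edge of $S$ out of the way, recurses on a strictly smaller fan, and flips the edge back.'' On the spine of the double wheel this step fails immediately: the two faces incident to any spine edge are the two triangles through the inside apex $a$ and the outside apex $b$, so flipping \emph{any} spine edge produces the edge $(a,b)$. Once $(a,b)$ is present, no other spine edge can be flipped, because its flip would also produce $(a,b)$. Thus you cannot park several spine edges ``out of the way'' simultaneously, and the recursion cannot proceed. The gadget in Figure~\ref{fig:spineswap} does not rescue this: it exchanges a spine edge with an incident \emph{non-spine} edge, not two spine edges with one another. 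You could compose it to swap two adjacent spine edges in $O(1)$ flips, but adjacent swaps alone only give an $O(|S|^2)$ reversal, not the $O(|S|)$ you need for the quicksort recurrence.

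There is a second gap: two thirds of the edges of the double wheel are non-spine edges, and ``interleaving them with the spine sort'' or ``arranging the canonical structure so that sorting the spine determines everything'' is not a plan. The paper's proof handles both issues by sorting on a \emph{fan}, not on the spine. One side of the double wheel---the spine vertices together with one apex---induces an outerplanar graph that is literally a triangulated convex polygon with the apex as its top vertex, so Theorem~\ref{thm-upper-bound} applies to it verbatim. The proof then uses the gadget of Figure~\ref{fig:spineswap} to shuttle labels between the spine and the two fans, invoking the convex-polygon sort a constant number of times to first route each of the three label groups $A,B,C$ to its correct region and then order each group internally. The key design choice you are missing is that the linearly ordered family on which one sorts must itself be a fan sharing a common vertex, so that the convex-polygon lemma applies directly; the spine cycle does not have this property.
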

\begin{proof}
We show that we can transform any edge-labelled combinatorial triangulation into a canonical one using $O(n\log n)$ flips.
As flips are reversible, we can also go from the canonical triangulation to any other, which proves the theorem.

We use a canonical triangulation much like the one used by Sleator~\etal~\cite{STT92} for the vertex-labelled variant.
It is a double wheel: a cycle of length $n - 2$ (called the \emph{spine}), plus a vertex $\vI$ inside the cycle and a vertex $\vO$ outside the cycle, each connected to every vertex on the cycle (see Figure~\ref{fig:comb-canonical}).
For our canonical labelling, we separate the labels into three groups.
\prc{Group \gS contains labels $1$ through $n - 2$, which we place} on the spine edges, starting with the edge on the outer face and continuing in clockwise order around $\vI$.
The next $n - 2$ labels make up group \gI and are placed on the edges incident to $\vI$ in clockwise order, starting with the edge incident to the vertex shared by the edges with labels 1 and 2.
Finally, group \gO \prc{~consists} of the last $n - 2$ labels, which we place on the edges incident to $\vO$ in clockwise order, starting with the edge that shares a vertex with the edge labelled $2n - 4$.
 
Our algorithm first ignores the labels and transforms the given triangulation into the unlabelled canonical triangulation.
This requires $O(n)$ flips~\cite{STT92} and results in the correct graph, although the labels may be in arbitrary positions.
To fix the labels, we first get the groups right, so all labels in group $\gS$ are on the spine, etc., and then rearrange labels within each group.

We use two main tools for this.
The first is a \emph{swap} that interchanges one spine edge with an incident non-spine edge in seven flips, using the flip sequence depicted in Figure~\ref{fig:spineswap}.
Our second tool is a \emph{scramble} algorithm that reorders all labels incident to $\vI$ or $\vO$ using $O(n \log n)$ flips.
To do this we first flip the spine edge that is part of the exterior face (labelled 1 in Figure~\ref{fig:comb-canonical}) and then apply the algorithm from Theorem~\ref{thm-upper-bound} to the outerplanar graph induced by the spine plus $\vI$ (or $\vO$), observing that no flip will create a duplicate edge since the omitted edges are all incident to $\vO$ (resp. $\vI$).
Note that this method cannot alter the labels on the two non-spine edges that lie on the exterior face of the outerplanar graph (labelled 7 and 12 in Figure~\ref{fig:comb-canonical}), but since there are only two of these, we can move them to their correct places by swapping them along the spine, using $O(n)$ flips total.

To get the labels of group $\gS$ on the spine, we partner every edge incident to $\vI$ that has a label in $\gS$ with an edge on the spine that has a label in $\gI$ or $\gO$.
A scramble at $\vI$ makes each such edge incident to its partner, and then swaps partners.
By doing the same at $\vO$, all labels of $\gS$ are placed on the spine.
Next we partner every edge incident to $\vI$ that has a label in $\gO$ with an edge incident to $\vO$ that has a label in $\gI$.
A scramble at $\vI$ makes partners incident, and three swaps per pair then exchange partners.

This ensures that each edge's label is in the correct group, but the order of the labels within each group may still be wrong.
Rearranging the labels in $\gI$ and $\gO$ is straightforward, as we can simply scramble at $\vI$ and $\vO$, leaving only the labels on the spine out of order.
We then use swaps to exchange the labels on the spine with those incident to $\vI$ in $O(n)$ flips and scramble at $\vI$ to order them correctly.
Since this scramble does not affect the order of labels on the spine, we can simply exchange the edges once more to obtain the canonical triangulation.
\end{proof}

\prc{
\subsection{Lower bound}

The proof for the lower bound for combinatorial triangulations is very similar to the lower bound for triangulations of a convex polygon, described in Section~\ref{sec:el-convex-lb}. We again construct a graph grammar, which describes transformations on the dual graph that correspond to flips.

As our primary graph is a combinatorial triangulation, each vertex of the dual graph corresponds to a triangle and has degree three. As such, there is no distinction between internal nodes and leaves, and no root. This means that we need to adapt our definitions slightly. Without a root, the placement of the edge-end labels is less constrained. We only require that they occur in counter-clockwise order around each vertex. The order of labels in each tag can now follow the placement of the edge-end labels: the first label belongs to the primary edge corresponding to the dual edge with edge-end label 1, and so on.

Finally, we need a few more production \prc{rules} to deal with all possible rotations of the edge-end labels around the two triangle-vertices involved in the flip. The full collection of rules is shown in Figure~\ref{fig:el-combinatorial-graph-grammar}.

\begin{figure}[htbp]
 \centering
 \includegraphics{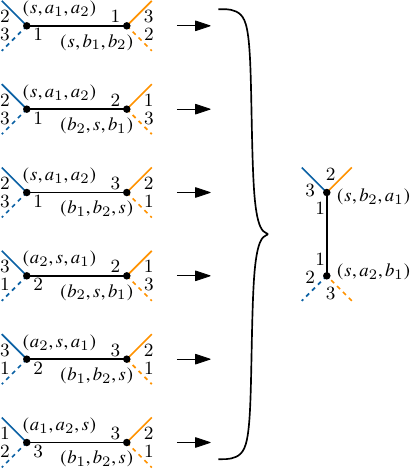}
 \caption{A graph grammar that corresponds to flips in edge-la\-belled combinatorial triangulations. The right-hand side of all productions is the same.}
 \label{fig:el-combinatorial-graph-grammar}
\end{figure}

As the dual graph of an $n$-vertex combinatorial triangulation has $2n - 4$ vertices, Theorem~\ref{thm:el-stt-reachable} gives us the following bound.

\begin{lemma}
 \label{lem:el-combinatorial-reachable}
 Given an $n$-vertex edge-la\-belled combinatorial triangulation $G$, the number of distinct edge-la\-belled triangulations reachable from $G$ in $m$ flips is at most $13^{2n - 4 + 2m}$.
\end{lemma}

Again, Sleator, Tarjan, and Thurston~\cite{STT92} show that this bound can be significantly reduced (to $3^{2n - 4}8^m$), but the simple bound suffices for our purposes.

\begin{theorem}
 \label{thm:el-combinatorial-lb}
 There are pairs of edge-la\-belled combinatorial triangulations with $n \geq 5$ vertices such that transforming one into the other requires $\Omega(n \log n)$ flips.
\end{theorem}
\begin{proof}
 If we fix the labelling of the spine edges in the canonical triangulation from the proof of Theorem~\ref{thm-upper-bound-comb}, any relabelling of the remaining edges is unique. Thus, there are at least $(2n - 6)!$ distinct edge-la\-belled combinatorial triangulations. Combined with Lemma~\ref{lem:el-combinatorial-reachable}, this implies that $13^{2n - 4 + 2d} \geq (2n - 6)!$, where $d$ is the diameter of the flip graph. We derive the following.
 \begin{align*}
  13^{2n - 4 + 2d}~~&\geq~~(2n - 6)! \\
  13^{2n - 4 + 2d}~~&\geq~~n!\text{\hspace{8em}(for $n \geq 5$)} \\
  \log_{13} 13^{2n - 4 + 2d}~~&\geq~~\log_{13} n! \\
  2n - 4 + 2d~~&\geq~~\log_{13} n! \\
  2d~~&\geq~~\log_{13} n! - 2n + 4 \\
  2d~~&\geq~~\Omega(n \log n) - O(n) \\
  d~~&\geq~~\Omega(n \log n) \qedhere
 \end{align*}
\end{proof}

\subsection{Simultaneous flips}

Recall that, in a triangulation of a convex polygon, a simultaneous flip is a set of flips that are executed in parallel, such that no two flipped edges share a triangle. In a combinatorial triangulation, we have the additional requirement that the resulting graph may not contain duplicate edges.

Simultaneous flips in combinatorial triangulations were first studied by Bose~\etal~\cite{Simultaneous06}. They showed a tight $\Theta(\log n)$ bound on the diameter of the flip graph. As part of their proof, they showed that every combinatorial triangulation can be made 4-connected with a single simultaneous flip. Recently, Cardinal~\etal~\cite{CHKTW15} proved that it is possible to find such a simultaneous flip that consists of fewer than $2n/3$ individual flips. They used this result to obtain arc drawings of planar graphs in which only $2n/3$ edges are represented by multiple arcs.

In this section, we show that, just as in the non-simultaneous setting, we obtain the same bounds for edge-la\-belled convex polygons and edge-la\-belled combinatorial triangulations. That is, we can transform any edge-la\-belled combinatorial triangulation into any other with $O(\log^2 n)$ simultaneous flips, and $\Omega(\log n)$ simultaneous flips are sometimes necessary. The lower bound holds already in the unlabelled setting, if one vertex has linear degree in the first triangulation, while every vertex has constant degree in the second. We now prove the upper bound.

\begin{figure}[htb]
 \centering
 \includegraphics{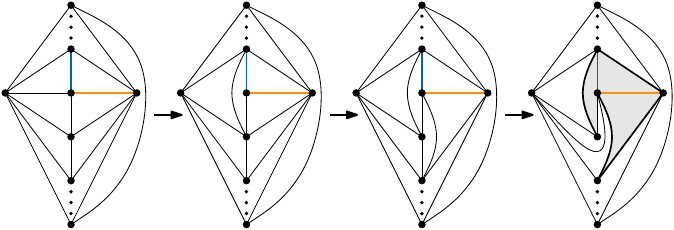}
 \caption{A sequence of three flips that creates a pentagon (shaded) in which the two highlighted edges can be swapped. All new edges and all diagonals of the pentagon are incident to one of the four spine vertices shown.}
 \label{fig:el-spineswap-simultaneous}
\end{figure}

\begin{theorem}
 Any edge-la\-belled combinatorial triangulation with $n$ vertices can be transformed into any other by $O(\log^2 n)$ simultaneous flips.
\end{theorem}
\begin{proof}
 We closely follow the strategy of the proof of Theorem~\ref{thm-upper-bound-comb}. We first transform the given triangulations into the canonical one with $O(\log n)$ simultaneous flips, using the result of Bose~\etal~\cite{Simultaneous06}. This reduces the problem to sorting the edge labels on the canonical triangulation. In the non-simultaneous setting, we did this by reordering the labels on the edges incident to $\vI$ or $\vO$ (called \emph{scrambling}), and swapping a subset of spine edges with incident non-spine edges. Thus, the theorem follows if we can show how to perform these operations with $O(\log^2 n)$ simultaneous flips.

 Since the sequence of flips from Figure~\ref{fig:spineswap}, which swaps a single spine edge with an incident non-spine edge, only involves a constant number of triangles, it is tempting to think we can simply perform many of these swaps simultaneously. Unfortunately, this is not the case, since the sequence creates the edge $(\vO, \vI)$. This means that trying to perform this sequence simultaneously in different locations would create a duplicate edge. Therefore we use a slightly longer sequence that creates a pentagon containing the edges to be swapped (illustrated in Figure~\ref{fig:el-spineswap-simultaneous}), performs the swap inside this pentagon, and restores the canonical triangulation, using a total of eleven flips. The crucial property of this sequence is that it only creates edges incident to four spine vertices near the edge to be swapped. Thus, we can perform any number of swaps simultaneously without creating duplicate edges, as long as each swap is at distance four or more from the others. This means that, given a set of spine edges to swap, we can divide them into four rounds such that the edges to be swapped in each round are at distance four or more, and perform the swaps in each round simultaneously. Thus, we can swap any subset of spine edges with $O(1)$ simultaneous flips.

 To scramble the edges incident on $\vO$, we first flip to create $(\vO, \vI)$ and then apply the algorithm from Theorem~\ref{thm:simultaneous} to the outerplanar graph induced by the edges incident to $\vO$. This uses $O(\log^2 n)$ simultaneous flips to rearrange all labels, except for those on the two outermost edges that are part of the boundary. In the non-simultaneous setting, we fixed this by swapping these labels along the spine, but this would take too many flips here. Instead, if the labels that need to be on the outermost edges are in the interior, we use Theorem~\ref{thm:simultaneous} to place these labels on the interior edges closest to the outermost edges. Then, we can exchange them with the labels on the outermost edges with only three swaps. This ensures that the outermost edges have the correct labels, so a second application of Theorem~\ref{thm:simultaneous} can place the remaining labels in the right order. If the label for one of the outermost edges is not in the interior and not already in place, it must be on the other outermost edge. In this case, we can first exchange it with the label on a nearby interior edge with a constant number of swaps. The entire sequence requires $O(\log^2 n)$ simultaneous flips.

 Since these operations use $O(1)$ and $O(\log^2 n)$ simultaneous flips, and we can sort the labels with a constant number of applications, the theorem follows.
\end{proof}
}

\section{Conclusions}

We have initiated the 
exploration of flips in edge-labelled triangulations and formulated the Orbit
Conjecture: that for any two edge-labelled triangulations of a 
polygon, point set, or vertex set \prc{(in the combinatorial setting)},
there is a sequence of flips that transforms one into the other if and only if for every label, the initial and final edge with that label lie in the same orbit.
Furthermore, we conjecture that the diameter of any connected component of the flip graph is bounded by a polynomial in the number of vertices.

We have established the conjecture for combinatorial triangulations and triangulations of convex polygons---all edges are in one orbit so the flip graph is connected, and its diameter is $\Theta(n \log n)$.
This means that the worst case number of flips is a $\log n$ factor more than for the unlabelled case.  
With simultaneous flips, edge labels raise the worst case bound
by at most a $\log n$ factor, but we could not prove that this is tight.
We also established the conjecture for spiral polygons, characterizing the orbits and showing that each connected component of the flip graph has diameter $\Theta(n^2)$.  
For general polygons there are examples of a connected flip graph of  diameter  $\Theta(n^3)$. 

%We conclude with the following open problem: Does the Orbit Conjecture hold for edge-labelled triangulations of simple polygons or edge-labelled triangulations of point sets?

\section*{References}

\bibliographystyle{abbrv}
\bibliography{cgta}

\end{document}